\providecommand{\tabularnewline}{\\}
\theoremstyle{plain}
\newtheorem{assumption}{\protect\assumptionname}
\theoremstyle{remark}
\newtheorem{rem}{\protect\remarkname}
\theoremstyle{plain}
\newtheorem{thm}{\protect\theoremname}
\theoremstyle{plain}
\newtheorem{cor}{\protect\corollaryname}
\theoremstyle{plain}
\newtheorem{lem}{\protect\lemmaname}
\date{}
\providecommand{\assumptionname}{Assumption}
\providecommand{\corollaryname}{Corollary}
\providecommand{\lemmaname}{Lemma}
\providecommand{\remarkname}{Remark}
\providecommand{\theoremname}{Theorem}
\begin{document}
\global\long\def\Acal{\mathcal{A}}%

\global\long\def\ta{\tilde{a}}%

\global\long\def\Bcal{\mathcal{B}}%

\global\long\def\tb{\tilde{b}}%

\global\long\def\Ccal{\mathcal{C}}%

\global\long\def\tc{\tilde{c}}%

\global\long\def\Dcal{\mathcal{D}}%

\global\long\def\hD{\hat{D}}%

\global\long\def\te{\tilde{e}}%

\global\long\def\tF{\tilde{F}}%

\global\long\def\bF{\bar{F}}%

\global\long\def\bG{\bar{G}}%

\global\long\def\hG{\hat{G}}%

\global\long\def\tG{\tilde{G}}%

\global\long\def\Hcal{\mathcal{H}}%

\global\long\def\tH{\tilde{H}}%

\global\long\def\bH{\bar{H}}%

\global\long\def\cH{\check{H}}%

\global\long\def\hL{\hat{L}}%

\global\long\def\Lcal{\mathcal{L}}%

\global\long\def\Mcal{\mathcal{M}}%

\global\long\def\Qcal{\mathcal{Q}}%

\global\long\def\tM{\tilde{M}}%

\global\long\def\bM{\bar{M}}%

\global\long\def\hM{\hat{M}}%

\global\long\def\hr{\hat{r}}%

\global\long\def\tR{\tilde{R}}%

\global\long\def\bR{\bar{R}}%

\global\long\def\RR{\mathbb{R}}%

\global\long\def\SS{\mathbb{S}}%

\global\long\def\ts{\tilde{s}}%

\global\long\def\Tcal{\mathcal{T}}%

\global\long\def\ttt{\tilde{t}}%

\global\long\def\hU{\hat{U}}%

\global\long\def\tu{\tilde{u}}%

\global\long\def\hV{\hat{V}}%

\global\long\def\bW{\bar{W}}%

\global\long\def\tW{\tilde{W}}%

\global\long\def\hw{\hat{w}}%

\global\long\def\tY{\tilde{Y}}%

\global\long\def\tZ{\tilde{Z}}%

\global\long\def\agmin{\arg\min}%

\global\long\def\hbeta{\hat{\beta}}%

\global\long\def\tbeta{\tilde{\beta}}%

\global\long\def\bbeta{\bar{\beta}}%

\global\long\def\teps{\tilde{\varepsilon}}%

\global\long\def\heps{\hat{\varepsilon}}%

\global\long\def\hpi{\hat{\pi}}%

\global\long\def\hxi{\hat{\xi}}%

\global\long\def\txi{\tilde{\xi}}%

\global\long\def\tDelta{\tilde{\Delta}}%

\global\long\def\tdelta{\tilde{\delta}}%

\global\long\def\hdelta{\hat{\delta}}%

\global\long\def\bkappa{\bar{\kappa}}%

\global\long\def\hgamma{\hat{\gamma}}%

\global\long\def\hGamma{\hat{\Gamma}}%

\global\long\def\tGamma{\tilde{\Gamma}}%

\global\long\def\bGamma{\bar{\Gamma}}%

\global\long\def\hSigma{\hat{\Sigma}}%

\global\long\def\bSigma{\bar{\Sigma}}%

\global\long\def\tSigma{\tilde{\Sigma}}%

\global\long\def\hsigma{\hat{\sigma}}%

\global\long\def\hTheta{\hat{\Theta}}%

\global\long\def\dTheta{\dot{\Theta}}%

\global\long\def\bTheta{\bar{\Theta}}%

\global\long\def\cTheta{\check{\Theta}}%

\global\long\def\rTheta{\mathring{\Theta}}%

\global\long\def\hPsi{\hat{\Psi}}%

\global\long\def\hpsi{\hat{\psi}}%

\global\long\def\hPhi{\hat{\Phi}}%

\global\long\def\hphi{\hat{\phi}}%

\global\long\def\tXi{\tilde{\Xi}}%

\global\long\def\lnorm{\left\Vert \right\Vert }%

\global\long\def\abs{\left|\right|}%

\global\long\def\ttheta{\tilde{\theta}}%

\global\long\def\htheta{\hat{\theta}}%

\global\long\def\dv{\dot{v}}%

\global\long\def\rank{{\rm rank}\,}%

\global\long\def\trace{{\rm trace}}%

\global\long\def\Var{{\rm Var}}%

\global\long\def\sign{{\rm sign}}%

\global\long\def\boldone{\mathbf{1}}%

\global\long\def\hmu{\hat{\mu}}%

\global\long\def\hA{\hat{A}}%

\global\long\def\tA{\tilde{A}}%

\global\long\def\hq{\hat{q}}%

\global\long\def\Fcal{\mathcal{F}}%

\global\long\def\tmu{\tilde{\mu}}%

\global\long\def\tX{\tilde{X}}%

\global\long\def\bX{\bar{X}}%

\global\long\def\htau{\hat{\tau}}%

\global\long\def\eig{{\rm eig}}%

\global\long\def\tq{\tilde{q}}%

\global\long\def\ttau{\tilde{\tau}}%

\global\long\def\halpha{\hat{\alpha}}%

\global\long\def\hrho{\hat{\rho}}%

\global\long\def\supp{{\rm supp}}%

\global\long\def\NN{\mathbb{N}}%

\global\long\def\tf{\tilde{f}}%

\global\long\def\tC{\tilde{C}}%

\global\long\def\txi{\tilde{\xi}}%

\global\long\def\tr{\tilde{r}}%

\global\long\def\hzeta{\hat{\zeta}}%

\global\long\def\barm{\bar{m}}%

\global\long\def\texti{\text{(i)}}%

\global\long\def\textii{\text{(ii)}}%

\global\long\def\textiii{\text{(iii)}}%

\global\long\def\oneb{\mathbf{1}}%

\title{Phase transition of the monotonicity assumption in learning local
average treatment effects\thanks{This version: March 24, 2021}}
\author{Yinchu Zhu\thanks{Email: yinchuzhu@brandeis.edu. We thank Tymon S{\l}oczy{\'n}ski,
Kaspar W{\"u}thrich and the participants of MIT Econometrics Lunch
for very helpful discussions and comments. We thank Krisztian Gado
for excellent research assistance. All the errors are mine. }\\
\\
Department of Economics\\
Brandeis University}
\maketitle
\begin{abstract}
We consider the setting in which a strong binary instrument is available
for a binary treatment. The traditional LATE approach assumes the
monotonicity condition stating that there are no defiers (or compliers).
Since this condition is not always obvious, we investigate the sensitivity
and testability of this condition. In particular, we focus on the
question: does a slight violation of monotonicity lead to a small
problem or a big problem? 

We find a phase transition for the monotonicity condition. On one
of the boundary of the phase transition, it is easy to learn the sign
of LATE and on the other side of the boundary, it is impossible to
learn the sign of LATE. Unfortunately, the impossible side of the
phase transition includes data-generating processes under which the
proportion of defiers tends to zero. This boundary of phase transition
is explicitly characterized in the case of binary outcomes. Outside
a special case, it is impossible to test whether the data-generating
process is on the nice side of the boundary. However, in the special
case that the non-compliance is almost one-sided, such a test is possible.
We also provide simple alternatives to monotonicity. %
\end{abstract}

\section{Introduction}

Instrumental variables (IV) regressions have been widely used to study
treatment effects in economics and other disciplines. One important
conceptual framework that justifies the causal interpretation of IV
regressions is the local average treatment (LATE). In this paper,
we consider a simple setting with no covariates and discuss the sensitivity
of the key monotonicity assumption in the LATE framework.

We observe iid data $\{(Y_{i},D_{i},Z_{i})\}_{i=1}^{n}$, where $Y_{i}=Y_{i}(1)D_{i}+Y_{i}(0)(1-D_{i})$,
$D_{i}=D_{i}(1)Z_{i}+D_{i}(0)(1-Z_{i})$ and $Z_{i},D_{i}(1),D_{i}(0)\in\{0,1\}$.
The treatment effect is $Y_{i}(1)-Y_{i}(0)$. (Notice that this assumes
that $Z_{i}$ does not directly affect the potential outcomes: $Y_{i}(z,d)=Y_{i}(d)$
for $z,d\in\{0,1\}$.) Throughout the paper, we maintain the assumption
that the IV is strong ($|cov(D_{i},Z_{i})|\geq C$ for a constant
$C>0$) and the following exogeneity condition
\begin{assumption}
\label{assu: IV exogeneity}$Z_{i}$ is independent of $(Y_{i}(1),Y_{i}(0),D_{i}(1),D_{i}(0))$.
\end{assumption}
The typical IV regression exploits the moment condition $EZ_{i}(Y_{i}-D_{i}\beta)=EZ_{i}E(Y_{i}-D_{i}\beta)$
(i.e., $cov(Z_{i},Y_{i}-D_{i}\beta)=0$). This means\footnote{Under Assumption \ref{assu: IV exogeneity}, $\beta$ can be written
in other ways. For example, $\beta=\frac{E(Y_{i}\mid Z_{i}=1)-E(Y_{i}\mid Z_{i}=0)}{E(D_{i}\mid Z_{i}=1)-E(D_{i}\mid Z_{i}=0)}$.} that
\[
\beta=\frac{E(Y_{i}Z_{i})-E(Y_{i})E(Z_{i})}{E(D_{i}Z_{i})-E(D_{i})E(Z_{i})}.
\]

To describe the causal interpretation of $\beta$, we categorize the
population into four types depending on the value of $(D_{i}(1),D_{i}(0))\in\{0,1\}\times\{0,1\}$.
We introduce their definitions and their probability: 
\begin{equation}
\begin{cases}
P(D_{i}(1)=D_{i}(0)=1)=a & \text{always\ taker}\\
P(D_{i}(1)=1,D_{i}(0)=0)=b & \text{complier}\\
P(D_{i}(1)=0,D_{i}(0)=1)=c & \text{defier}\\
P(D_{i}(1)=0,D_{i}(0)=0)=1-a-b-c & \text{never taker}.
\end{cases}\label{eq: four types}
\end{equation}

As shown in \citet{angrist1996identification}, 
\begin{equation}
\beta=\frac{\mu_{1}b-\mu_{2}c}{b-c},\label{eq: IV beta}
\end{equation}
where $\mu_{1}$ and $\mu_{2}$ are the local average treatment effects
(LATE) for compliers and defiers, respectively: 
\[
\begin{cases}
\mu_{1}=E(Y_{i}(1)-Y_{i}(0)\mid D_{i}(1)=1,D_{i}(0)=0)\\
\mu_{2}=E(Y_{i}(1)-Y_{i}(0)\mid D_{i}(1)=0,D_{i}(0)=1).
\end{cases}
\]

Since (\ref{eq: IV beta}) is in general not a convex combination
of $\mu_{1}$ and $\mu_{2}$, $\beta$ typically does not have a causal
interpretation without further assumptions. The classical assumption
that makes $\beta$ causally interpretable is the following monotonicity
condition. 
\begin{quotation}
\textbf{Monotonicity condition:} either $b=0$ or $c=0$. 
\end{quotation}
Clearly, under the monotonicity condition, (\ref{eq: IV beta}) implies
that $\beta=\mu_{1}$ or $\beta=\mu_{2}$, which can be summarized
as $\beta=E(Y_{i}(1)-Y_{i}(0)\mid D_{i}(1)\neq D_{i}(0))$. Hence,
$\beta$ is interpreted as the average treatment effect on the sub-population
for which $D_{i}(1)\neq D_{i}(0)$. 

As pointed out by \citet{imbens2014instrumental}, perhaps the strongest
justification of the monotonicity condition is when the instrument
provides an incentive to choose the treatment or when the treatment
is simply not an option without $Z_{i}=1$. Outside these situations,
the validity of the monotonicity condition is not always obvious.
In this paper, we try to answer the following questions
\begin{itemize}
\item Suppose that the data can easily reject $H_{0}:\ \beta=0$. If the
monotonicity is slightly violated ($b$ is far from zero but $c$
is close to zero), would this create a big problem or small problem
for learning LATE?
\item In applications with almost one-sided non-compliance ($P(D_{i}=1\mid Z_{i}=0)\approx0$),
should we worry about the interpretation of $\beta$?
\item What are other options for learning LATE without the monotonicity
condition?
\end{itemize}

\subsection{Background of the problem and summary of main results}

Let us explain why (some of) these questions might be quite subtle
and difficult although they seem to have an obvious answer at the
first glance.

The majority of the paper focuses on the seemingly simple question
of learning the sign of LATE (so we can answer the basic question
of whether the treatment is beneficial or harmful). In particular,
whether we can conclude that $\mu_{1}$ and $\beta$ have the same
sign when monotonicity is slightly violated ($c\approx0$). By rearranging
(\ref{eq: IV beta}), we have 
\[
\mu_{1}=\frac{c\mu_{2}+(b-c)\beta}{b}.
\]

Suppose that $\beta<0$. It is easy to see that $\mu_{1}<0$ (and
thus has the same sign as $\beta$) if and only if $\mu_{2}<-(\beta/c)(b-c)$.
Throughout the paper, we assume that $P(|Y_{i}|\leq M)=1$ for a constant
$M>0$. Then the question of learning the sign of $\mu_{1}$ would
seem straight-forward. If $c\rightarrow0$ and $|\beta|$ and $b$
are bounded below by a positive constant, then the threshold $-(\beta/c)(b-c)$
tends to infinity. Since $\mu_{2}$ is bounded (due to the boundedness
of $Y_{i}$), the condition of $\mu_{2}<-(\beta/c)(b-c)$ is asymptotically
satisfied. Hence, the conclusion would be that no matter how slowly
$c$ goes to zero, it is asymptotically valid to conclude that $\mu_{1}$
and $\beta$ have the same sign. 

One subtly is whether modeling $|\beta|$ as a quantity bounded below
by a positive constant is an asymptotic framework that is empirically
relevant. In many empirical studies, if we throw away half of the
data and run the IV regression, we often do not find a statistically
significant $\beta$ anymore. Then it might be too strong to assume
that $|\beta|$ is of a much larger order of magnitude compared to
the estimation noise. Moreover, statistically significance of $\beta$
does not mean that $|\beta|$ is bounded below by a positive constant;
statistical significance is asymptotically guaranteed even if $|\beta|\rightarrow0$
and $\sqrt{n}|\beta|\rightarrow\infty$. To provide robust results
that are empirically relevant, we shall allow $|\beta|\rightarrow0$.
In fact, the use of drifting sequences is the standard practice for
establishing robust analysis in many areas of econometrics.\footnote{Examples include weak instruments (e.g., \citet{staiger1997instrumental}),
local-to-unit-root process (e.g., \citet{Stock1991}), estimation
on the boundary (e.g., \citet{andrews1999estimation}), model selection
(e.g., \citet{Leeb2005}), moment inequalities (e.g., \citet{andrews2009validity})
and time series forecasting (e.g., \citet{hirano2017forecasting})
among others. }

When $|\beta|$ is allowed to tend to zero, the situation is less
straight-forward. When $|b|,c\rightarrow0$,\footnote{Under strong IV condition (say $cov(D_{i},Z_{i})>0$), $b\gtrsim cov(D_{i},Z_{i})$,
which is bounded below by a positive constant.} the threshold of $-(\beta/c)(b-c)$ may or may not be tending to
infinity, depending on the ratio $|\beta|/c$. This paper tries to
find out how worried we should be about $c\rightarrow0$ (but $c\neq0$)
in this case. It turns out that the answer depends on whether $P(D_{i}=1\mid Z_{i}=0)$
is close to zero or not. We now explain our findings. Let us try to
construct a confidence set for the sign of $\mu_{1}$, i.e., a mapping
from the data to a subset of $\{-1,0,1\}$.

The case with $P(D_{i}=1\mid Z_{i}=0)$ being far away from zero is
common, e.g., $P(D_{i}=1\mid Z_{i}=0)>30\%$ in \citet{angrist1998children}.
The question in this case is whether a slight violation of monotonicity
is a big deal. From the discussion above, it is obvious that it is
not a big deal if $|\beta|/c\rightarrow\infty$. The natural way to
proceed is to construct a test or a data-dependent check. If the test
or data check suggests that monotonicity might be a problem, then
use $\{-1,0,1\}$ as the confidence set; if the test suggests otherwise,
then use the more informative set $\{-1\}$ (because $\beta<0$).
This overall procedure has an answer in every situation, regardless
of whether violation of monotonicity is a big deal. However, we show
that if this procedure is robust (i.e., valid with or without monotonicity),
then it must be uninformative (contains both $-1$ and $1$) under
monotonicity ($c=0$). Notice that this is true no matter how sophisticated
the test is. Therefore, although small enough violation of monotonicity
($|\beta|/c\rightarrow\infty$) does not cause a problem, we cannot
really check whether potential violation of monotonicity is small
enough. As a result, if $\beta$ is statistically significant and
the violation of monotonicity tends to zero, this violation may or
may not cause a problem, and we show that no data-dependent procedure
is smart enough to find out (even after imposing constraints such
as $\mu_{1}$ and $\mu_{2}$ having the same sign and both have magnitude
at least $|\beta|$ plus strong distributional restrictions such as
Bernoulli). 

Another common case is $P(D_{i}=1\mid Z_{i}=0)\approx0$. This is
typical when the non-compliance is almost one-sided, e.g., $P(D_{i}=1\mid Z_{i}=0)<2\%$
in the example of Job Training Partnership Act (JPTA). Of course,
$c\rightarrow0$ would still cause a problem if $|\beta|/c\rightarrow0$.
However, since $P(D_{i}=1\mid Z_{i}=0)=a+c\geq c$, we can at least
carve out a ``safe'' region based on the data. For example, if $|\beta|/P(D_{i}=1\mid Z_{i}=0)\rightarrow\infty$,
then $|\beta|/c\rightarrow\infty$ and thus violation of monotonicity
does not cause a problem. Notice that $|\beta|/P(D_{i}=1\mid Z_{i}=0)\rightarrow\infty$
is testable since both $|\beta|$ and $P(D_{i}=1\mid Z_{i}=0)$ can
be learned from the data. In the case of binary outcomes, we provide
a precise characterization of the ``safe'' region. It turns out
that this ``safe'' region also highlights a sharp contrast. If the
data-generating process is in the ``safe'' region, $\mu_{1}$ and
$\beta$ have the same sign; otherwise, the impossibility result from
before holds. 

We refer to this sharp contrast as a phase transition. On side of
the boundary, learning the sign of $\mu_{1}$ is trivial, whereas
it is impossible on the other side of the boundary. There is little
or nothing in the middle. This is the case no matter whether $P(D_{i}=1\mid Z_{i}=0)$
is far away from or close to zero. The difference is that in the former
case, it is impossible to find out on which side of the phase-transition
bound the data-generating process is; the testability is possible
in the latter case. In the former case, we still provide a precise
characterization of the phase transition at least for binary outcomes
because it is useful for robustness checks. For example, suppose that
the boundary of the phase transition is $0.5\%$ of defiers. Although
it is impossible to check which side of the boundary the data-generating
process is, it is still important to know that a mere $1\%$ of defiers
would put the data-generating process on the ``dangerous'' side
of the phase-transition boundary. %
\begin{comment}
We find that there is a phase transition for the monotonicity condition.
If the degree to which monotonicity is violated exceeds a boundary,
it is impossible to learn the sign of LATE; otherwise, if the violation
of monotonicity is within the boundary, the sign of LATE is the sign
of $\beta$. One observation is that this boundary depends on $|\beta|$
so we should not assess the problem of monotonicity by the value of
$c$ alone. What makes this finding interesting is that this boundary
can go to zero. As a result, even if the proportion of defiers tends
to zero ($c\rightarrow0$), it does not necessarily mean that we can
reliably identify the sign of LATE. Moreover, we show that for the
purpose of learning the sign of LATE, it is impossible to test whether
the violation of monotonicity is beyond or within the phase-transition
boundary. Therefore, a subtle and undetectable violation of the monotonicity
condition could change the fundamental understanding of empirical
results: does the treatment benefit or hurt?

The situation is more tractable under almost one-sided non-compliance
$P(D_{i}=1\mid Z_{i}=0)\approx0$. In the case of binary outcomes,
the boundary of phase transition can be characterized by a simple
inequality that can be checked using the data. 
\end{comment}

We also outline other ways of learning LATE. We show that the magnitude
of $\mu_{1}$ and $\mu_{2}$ is bounded below by $|\beta|\cdot\gamma$,
where $\gamma$ can be consistently estimated and satisfies $\gamma\asymp|cov(D_{i},Z_{i})|$.
We also show that imposing $|\mu_{1}|\geq|\mu_{2}|$ is enough to
identify the sign of LATE. These results do not rely on monotonicity
at all. %
\begin{comment}
We provide more discussions in Section \ref{sec: conclusion}.
\end{comment}

\subsection{Related literature}

The literature of IV regressions has a long history dating back to
at least \citet{wright1928tariff}. An excellent review on this vast
literature can be found in \citet{imbens2014instrumental}. The framework
of LATE was started by the seminal work of \citet{imbens1994identification},
\citet{angrist1996identification} and \citet{abadie2003semiparametric}.
Since then the LATE-type idea has also been explored in the study
of quantile treatment effects, e.g., \citet{abadie2002instrumental}
and \citet{wuthrich2020comparison}. The framework of LATE fueled
many empirical work ever since the early influential studies including
\citet{angrist1991draft} and \citet{angrist1998children}. The nature
of monotonicity condition has been discussed for decades, e.g., \citet{robins1989analysis},
\citet{balke1995counterfactuals}, \citet{vytlacil2002independence}
and \citet{heckman2005structural}. Since there is not always an obvious
justification for the monotonicity condition, various specification
tests and alternatives have been proposed, see \citet{huber2015testing},
\citet{kitagawa2015test}, \citet{mourifie2017testing}, \citet{de2017tolerating}
and \citet{2011.06695} among many others. Another interesting approach
focuses on the partial identification of average treatment effects
or other quantities under various restrictions, see \citet{balke1997bounds},
\citet{manski2003partial}, \citet{swanson2018partial} and \citet{machado2019instrumental}
among many others. 

\section{\label{sec: sign LATE}Learning the sign of LATE}

In the rest of the paper, we use the following notation. For $x\in\RR$,
let 
\[
\sign(x)=\begin{cases}
1 & \text{if }x>0\\
0 & \text{if }x=0\\
-1 & \text{if }x<0.
\end{cases}
\]

In Table \ref{tab: empirical}, we consider two empirical studies.
In the JPTA study, the treatment $D_{i}$ is job training and $Z_{i}$
is the indicator of the randomized offer of training and the treat.
In the example of \citet{angrist1998children}, we consider case with
$D_{i}$ being the indicator of being more than 2 children and $Z_{i}$
being the indicator of same sex in the first two children. 

\begin{table}

\caption{\label{tab: empirical}Some estimates in two empirical studies}

\begin{centering}
\begin{tabular}{ccccc}
 &  &  &  & \tabularnewline
 & $P(Z_{i}=1)$ & $P(D_{i}=1\mid Z_{i}=1)$ & $P(D_{i}=1\mid Z_{i}=0)$ & \tabularnewline
\cline{1-4} \cline{2-4} \cline{3-4} \cline{4-4} 
JPTA & 0.6662 & 0.6228 & 0.0112 & \tabularnewline
\citet{angrist1998children} & 0.5048 & 0.4105 & 0.3557 & \tabularnewline
 &  &  &  & \tabularnewline
\end{tabular}
\par\end{centering}
\end{table}

We use these two studies to illustrate the two cases. In \citet{angrist1998children},
$P(D_{i}=1\mid Z_{i}=0)$ is not close to zero. Although the interpretation
of the result is clear under the monotonicity condition ($c=0$),
what if we have a small proportion of defiers ($c\approx0$)? We consider
this setting in Section \ref{subsec: small violation of mono}. In
the JPTA study, $P(D_{i}=1\mid Z_{i}=0)$ is close to zero, but does
this mean that we do not need to worry? We provide analysis for this
setting in Section \ref{subsec: small violation of one-sided noncomp}.
We state most of the theoretical results for $\beta<0$, but results
for $\beta>0$ can be obtained analogously. %
\begin{comment}
For much of this section, we focus on local analysis ($|\beta|\rightarrow0$).
This is motivated by finite-sample observations. In many studies,
we often cannot reject $H_{0}:\ \beta=0$ at 5\% significance level
if we cut the sample size by half. To analyze such problems in an
asymptotic framework, it seems unrealistic to assume that $|\beta|$
is bounded away from zero and $n$ grows to infinity. A more reasonable
framework is to consider the case of $|\beta|\rightarrow0$. It is
worth noting that the theoretical results do not rule out any specific
rate at which $|\beta|$ tends to zero. Perhaps the most interesting
case is $|\beta|\gtrsim n^{-1/2}$ so that we can detect $\beta$
from the data but the question is how worried we should be even if
$c$ also tends to zero.
\end{comment}

\subsection{\label{subsec: small violation of mono}Small violation of monotonicity:
$P(D_{i}=1\mid Z_{i}=0)\gg0$}

For simplicity, we assume that the distribution of $Z_{i}\in\{0,1\}$
is known. We first introduce notations for the distribution of $(Y_{i}(1),Y_{i}(0),D_{i}(1),D_{i}(0))$.
We specify the distribution of $(D_{i}(1),D_{i}(0))$ and then the
conditional distribution of $(Y_{i}(1),Y_{i}(0))\mid(D_{i}(1),D_{i}(0))$.
The former is straight-forward; we simply use the same notation $a,b,c$
as in (\ref{eq: four types}). Define the conditional distribution
\[
H(y_{1},y_{0},d_{1},d_{0})=P\left(Y_{i}(1)\leq y_{1}\ and\ Y_{i}(0)\leq y_{0}\mid D_{i}(1)=d_{1},D_{i}(0)=d_{0}\right).
\]

Let $\theta=(a,b,c,H)$. Then the distribution of $(Z_{i},Y_{i}(1),Y_{i}(0),D_{i}(1),D_{i}(0))$
is indexed by $\theta$. Let $P_{\theta}$ and $E_{\theta}$ denote
the distribution and expectation under $\theta$, respectively. The
following quantities can be written as a function of $\theta$: 
\begin{itemize}
\item LATE for compliers: $\mu_{1}(\theta)=E_{\theta}(Y_{i}(1)-Y_{i}(0)\mid D_{i}(1)=1,D_{i}(0)=0)$
\item LATE for defiers: $\mu_{2}(\theta)=E_{\theta}(Y_{i}(1)-Y_{i}(0)\mid D_{i}(1)=0,D_{i}(0)=1)$
\end{itemize}
From the data $W=\{(Y_{i},D_{i},Z_{i})\}_{i=1}^{n}$, we can identify
the following quantities: 
\begin{itemize}
\item $k_{1}=a+b=E(D_{i}\mid Z_{i}=1)$
\item $k_{2}=a+c=E(D_{i}\mid Z_{i}=0)$
\item $\beta=[\mu_{1}(\theta)b-\mu_{2}(\theta)c]/(b-c)$.
\end{itemize}
Assuming that these three quantities are known, consider the following
parameter space: 
\begin{multline*}
\Theta(\eta)=\biggl\{\theta=(a,b,c,H):\ a,b,c\in[0,1],\ a+b+c\in[0,1],\ a+b=k_{1},\ a+c=k_{2},\\
\max_{d,z\in\{0,1\}}P_{\theta}(|Y_{i}|\geq M\mid D_{i}=d,Z_{i}=z)=0,\ \frac{\mu_{1}(\theta)b-\mu_{2}(\theta)c}{b-c}=\beta,\\
\ |\mu_{1}(\theta)|\geq|\beta|,\ \sign(\mu_{1}(\theta))=\sign(\mu_{2}(\theta)),\ 0\leq c\leq\eta\biggr\},
\end{multline*}
where $M>0$ is a constant. 

Clearly, $\Theta(\eta)$ assumes a lot of structures that are typically
unavailable in practice. In particular, it assumes that $P(D_{i}=1\mid Z_{i}=1)$,
$P(D_{i}=1\mid Z_{i}=0)$ and the population IV regression coefficient
$\beta$ are known. Moreover, it assumes that the LATE for the compliers
and defiers has the same sign and that the magnitude of LATE for compliers
is not too small. The only difficulty is that $c$ (proportion of
defiers) might not be exactly zero and is allowed to be between $0$
and a small tolerance level $\eta$. The point of this subsection
is to show that even under these additional assumptions, allowing
for a small $\eta$ makes it impossible to learn the sign of LATE.
To make this point, we show that many data generating processes with
no defiers and $\sign(\mu_{1})=\beta$ are observationally equivalent
to those with a small proportion of defiers and $\sign(\mu_{1})\neq\beta$. 

To formally state this, we define the following subset
\[
\Theta_{*}=\left\{ \theta=(a,b,c,H)\in\Theta(\eta):\ c=0,\ Q_{1,\theta}(1-\varepsilon_{1})-Q_{2,\theta}(\varepsilon_{1})>\varepsilon_{2}\right\} ,
\]
where $\varepsilon_{1},\varepsilon_{2}>0$ are constants and $Q_{1,\theta}$
and $Q_{2,\theta}$ are the quantile functions of $Y_{i}\mid(D_{i}=1,Z_{i}=0)$
and $Y_{i}\mid(D_{i}=0,Z_{i}=1)$ under $P_{\theta}$, respectively;
in other words, for any $\varepsilon\in(0,1)$,
\[
Q_{1,\theta}(\varepsilon)=\inf\left\{ t\in\RR:\ P_{\theta}\left(Y_{i}\leq t\mid D_{i}=1,Z_{i}=0\right)\geq\varepsilon\right\} 
\]
and 
\[
Q_{2,\theta}(\varepsilon)=\inf\left\{ t\in\RR:\ P_{\theta}\left(Y_{i}\leq t\mid D_{i}=0,Z_{i}=1\right)\geq\varepsilon\right\} .
\]

\begin{rem}
\label{rem: overlap}The condition of $Q_{1,\theta}(1-\varepsilon_{1})-Q_{2,\theta}(\varepsilon_{1})>\varepsilon_{2}$
is not very restrictive. When $Y_{i}$ is binary in $\{0,1\}$, $Q_{1,\theta}(1-\varepsilon_{1})-Q_{2,\theta}(\varepsilon_{1})>\varepsilon_{2}$
holds if 
\[
E_{\theta}(Y_{i}\mid D_{i}=1,Z_{i}=0),E_{\theta}(Y_{i}\mid D_{i}=0,Z_{i}=1)\in(\varepsilon_{1},1-\varepsilon_{1}).
\]

This seems to be reasonable since it might be a bit unrealistic to
expect extreme situations with $E_{\theta}(Y_{i}\mid D_{i}=1,Z_{i}=0)\rightarrow0$
or $E_{\theta}(Y_{i}\mid D_{i}=0,Z_{i}=1)\rightarrow1$. 
\end{rem}
We now state the key observation. 
\begin{thm}
\label{thm: key imps}Let $M,\varepsilon_{2}>0$ and $\varepsilon_{1}\in(0,1)$.
Assume that $\beta<0$, $0<\eta<\varepsilon_{1}\min\{k_{2},1-k_{1},k_{1}-k_{2}\}$
and $3|\beta|/\eta<\varepsilon_{2}/(k_{1}-k_{2})$. Then for any $\theta\in\Theta_{*}$,
there exists $\ttheta\in\Theta(\eta)$ such that \\
(1) $P_{\theta}$ and $P_{\ttheta}$ imply the same distribution for
the observed data $(Y_{i},D_{i},Z_{i})$\\
(2) $\mu_{1}(\theta)=\beta<0$ and $\mu_{1}(\ttheta),\mu_{2}(\ttheta)>-\beta>0$.
\end{thm}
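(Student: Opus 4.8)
The plan is to exhibit, for each $\theta\in\Theta_{*}$, a perturbation $\ttheta$ that manufactures a small defier population out of the \emph{extreme tails} of the observed outcome distributions, in such a way that the four observable cells $(D_i,Z_i)\in\{0,1\}^{2}$ are left exactly unchanged. The starting observation is that under $\theta$ we have $c=0$, so $a=k_{2}$, $b=k_{1}-k_{2}$, and the cells $(D_i=1,Z_i=0)$ and $(D_i=0,Z_i=1)$ consist \emph{purely} of always-takers revealing $Y_i(1)$ and never-takers revealing $Y_i(0)$; their quantile functions are precisely $Q_{1,\theta}$ and $Q_{2,\theta}$. Since the defier effect equals $E[Y_i(1)\mid\text{defier}]-E[Y_i(0)\mid\text{defier}]$ and depends only on marginals, the overlap condition built into $\Theta_{*}$ is exactly what will let me make $\mu_{2}$ large and positive.

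Concretely, I would set $\tc=\eta$, $\ta=k_{2}-\eta$ and $\tb=k_{1}-k_{2}+\eta$ (so $\tb-\tc=k_{1}-k_{2}$); these lie in $[0,1]$ and satisfy $\ta+\tb+\tc=k_{1}+\eta\le 1$ precisely because $\eta<\varepsilon_{1}\min\{k_{2},1-k_{1},k_{1}-k_{2}\}$. I split the cell-$(D_i=1,Z_i=0)$ measure (mass $k_{2}$) into its top-$\eta$ piece, declared to be the defiers' $Y_i(1)$, and the remaining mass $\ta$, declared to be $\ttheta$'s always-takers' $Y_i(1)$; symmetrically I split the cell-$(D_i=0,Z_i=1)$ measure (mass $1-k_{1}$) into its bottom-$\eta$ piece (defiers' $Y_i(0)$) and the rest (never-takers' $Y_i(0)$). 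The defier coupling and the two never-observed marginals (always-takers' $Y_i(0)$, never-takers' $Y_i(1)$) are filled in arbitrarily within $[-M,M]$. The heart of part (1) is then matching cells $(D_i=1,Z_i=1)$ and $(D_i=0,Z_i=0)$: a short measure computation forces $\ttheta$'s complier $Y_i(1)$ to be the old complier $Y_i(1)$ (mass $k_{1}-k_{2}$) plus the high defier chunk (mass $\eta$), and $\ttheta$'s complier $Y_i(0)$ the old complier $Y_i(0)$ plus the low defier chunk; both mixtures have the correct total mass $\tb$ and leave the two cells identical to those under $\theta$. Averaging then yields
\[
\mu_{1}(\ttheta)=\frac{(k_{1}-k_{2})\,\mu_{1}(\theta)+\eta\,\mu_{2}(\ttheta)}{k_{1}-k_{2}+\eta},
\]
which, using $\mu_{1}(\theta)=\beta$, is algebraically equivalent to $\mu_{1}(\ttheta)\tb-\mu_{2}(\ttheta)\tc=\beta(k_{1}-k_{2})$, so the $\beta$-identity in $\Theta(\eta)$ holds for free.

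For part (2), because $\eta/k_{2}<\varepsilon_{1}$ and $\eta/(1-k_{1})<\varepsilon_{1}$, the carved tails lie beyond the relevant quantiles, whence
\[
\mu_{2}(\ttheta)\ge Q_{1,\theta}(1-\varepsilon_{1})-Q_{2,\theta}(\varepsilon_{1})>\varepsilon_{2}.
\]
Rearranging the displayed identity, $\mu_{1}(\ttheta)>-\beta$ is equivalent to $\mu_{2}(\ttheta)>|\beta|\,(2(k_{1}-k_{2})+\eta)/\eta$, and the right-hand side is bounded by $3|\beta|(k_{1}-k_{2})/\eta<\varepsilon_{2}$ thanks to $\eta<k_{1}-k_{2}$ and the hypothesis $3|\beta|/\eta<\varepsilon_{2}/(k_{1}-k_{2})$. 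Hence $\mu_{1}(\ttheta)>-\beta$, while $\mu_{2}(\ttheta)>\varepsilon_{2}>3|\beta|>-\beta$ as well; this gives $\mu_{1}(\ttheta),\mu_{2}(\ttheta)>-\beta>0$ and $\sign(\mu_{1}(\ttheta))=\sign(\mu_{2}(\ttheta))$, and membership $\ttheta\in\Theta(\eta)$ follows by collecting the constraints verified above.

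The main obstacle I anticipate is the measure-theoretic bookkeeping in part (1): carefully defining the ``top-$\eta$'' and ``bottom-$\eta$'' sub-measures (with attention to atoms, e.g.\ the binary-outcome case, where one may have to split mass at a single point), and checking that all four cells and the feasibility constraints $\ta,\tb,\tc\ge 0$ and bounded support hold simultaneously. The smallness of $\eta$ relative to $\varepsilon_{1}\min\{k_{2},1-k_{1},k_{1}-k_{2}\}$ is exactly what guarantees there is enough tail mass to carve while confining the defiers beyond the $\varepsilon_{1}$-quantiles, and the relation $3|\beta|/\eta<\varepsilon_{2}/(k_{1}-k_{2})$ is what converts the overlap gap $\varepsilon_{2}$ into the sign reversal.
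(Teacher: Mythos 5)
Your proposal is correct and follows essentially the same route as the paper: manufacture a mass-$\eta$ defier subpopulation from the upper tail of the $(D_i=1,Z_i=0)$ cell and the lower tail of the $(D_i=0,Z_i=1)$ cell, fold those tails into the complier marginals so all four observed cells are unchanged, and then use the identity $\mu_{1}(\ttheta)=\bigl((k_{1}-k_{2})\beta+\eta\,\mu_{2}(\ttheta)\bigr)/(k_{1}-k_{2}+\eta)$ together with $\mu_{2}(\ttheta)>\varepsilon_{2}$ and $3|\beta|/\eta<\varepsilon_{2}/(k_{1}-k_{2})$ to flip the sign. The only (immaterial) difference is that you carve off exactly the top/bottom $\eta$ of mass, whereas the paper thins the tail beyond the cutoffs $B_{1}=Q_{1,\theta}(1-\varepsilon_{1})-\delta$ and $B_{2}=Q_{2,\theta}(\varepsilon_{1})$ proportionally; both handle atoms and yield legitimate cdfs under the stated bounds on $\eta$.
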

Theorem \ref{thm: key imps} provides the key insight on why lack
of monotonicity creates difficult issues. For a small tolerance level
$\eta$, as long as $|\beta|/\eta$ is not too large, a data-generating
process with no defiers would look exactly like another data-generating
process with a small proportion of defiers such that LATE has different
signs under the two data-generating processes. 

This sheds light on one of the most common problems in IV regressions.
If we reject $H_{0}:\ \beta=0$ and have some arguments against the
presence of defiers (e.g., $Z_{i}$ provides more information and
thus encourages $D_{i}=1$), can we reliably say that $\mu_{1}$ is
non-zero and has the same sign as $\beta$? By Theorem \ref{thm: key imps},
we see that a small proportion of defiers might be enough to invalidate
the result. In the asymptotic framework, rejecting $H_{0}:\ \beta=0$
in large samples is almost guaranteed when $|\beta|\gg n^{-1/2}$.
However, even if $\eta\rightarrow0$ (the proportion of defiers is
small), the observed data is indistinguishable from a distribution
with $\mu_{1}\neq\sign(\beta)$ when $\eta\gg|\beta|$. Therefore,
a slight violation of monotonicity creates a problem if $\eta\gg|\beta|$. 

The natural question is whether or not we could check $\eta\gg|\beta|$
in the data. Unfortunately, the answer is no. We now show this using
an adaptivity argument based on Theorem \ref{thm: key imps}. We define
a confidence set of $\mu_{1}$ to be any measurable function mapping
the observed data $W_{n}$ to a subset of $\{-1,0,1\}$ with a guarantee
on the coverage probability. 
\begin{cor}
\label{cor: adaptivity}Let $M,\varepsilon_{1},\varepsilon_{2},k_{1},k_{2}>0$
be any fixed constants such that $k_{1}-k_{2}>0$. Assume that $\beta<0$,
$\eta\rightarrow0$ and $|\beta|\rightarrow0$ such that $|\beta|\ll\eta$.
Let $CS(W_{n})$ be a confidence set for $\sign(\mu_{1}(\theta))$
with validity over $\Theta(\eta)$, i.e., 
\[
\liminf_{n\rightarrow\infty}\inf_{\theta\in\Theta(\eta)}P_{\theta}\left(\sign(\mu_{1}(\theta))\in CS(W_{n})\right)\geq1-\alpha,
\]
where $\alpha\in(0,1)$. Then 
\[
\liminf_{n\rightarrow\infty}\inf_{\theta\in\Theta_{*}}P_{\theta}\left(\{-1,1\}\subset CS(W_{n})\right)\geq1-2\alpha.
\]
\end{cor}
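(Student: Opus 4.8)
The plan is to combine Theorem \ref{thm: key imps} with a two-point Bonferroni argument, exploiting that a confidence set is by definition a measurable function of the observed data only, so it cannot tell apart two observationally equivalent parameters.

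First I would check that the hypotheses of Theorem \ref{thm: key imps} are eventually satisfied along the drifting sequence. Since $M,\varepsilon_{1},\varepsilon_{2},k_{1},k_{2}$ are fixed with $k_{1}-k_{2}>0$, the bound $\varepsilon_{1}\min\{k_{2},1-k_{1},k_{1}-k_{2}\}$ is a positive constant, so $\eta\rightarrow0$ delivers $0<\eta<\varepsilon_{1}\min\{k_{2},1-k_{1},k_{1}-k_{2}\}$ for all large $n$; and because $|\beta|\ll\eta$ means $|\beta|/\eta\rightarrow0$, the requirement $3|\beta|/\eta<\varepsilon_{2}/(k_{1}-k_{2})$ also holds for all large $n$. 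Hence Theorem \ref{thm: key imps} is available for $n$ large enough, which is all the $\liminf$ statements need. I would also record two structural facts: $\Theta_{*}\subset\Theta(\eta)$ by construction, and every $\theta\in\Theta_{*}$ has $c=0$ and therefore $a=k_{2}$, $b=k_{1}-k_{2}>0$, so the defining constraint of $\Theta(\eta)$ forces $\mu_{1}(\theta)=\beta<0$, i.e. $\sign(\mu_{1}(\theta))=-1$.

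Next I would extract two coverage inequalities from the validity hypothesis. Because $\Theta_{*}\subset\Theta(\eta)$ and $\sign(\mu_{1}(\theta))=-1$ on $\Theta_{*}$, restricting the validity statement to $\Theta_{*}$ gives $\liminf_{n}\inf_{\theta\in\Theta_{*}}P_{\theta}(-1\in CS(W_{n}))\geq1-\alpha$. For the second inequality I would use the companion $\ttheta\in\Theta(\eta)$ supplied by Theorem \ref{thm: key imps}: it satisfies $\mu_{1}(\ttheta)>-\beta>0$, so $\sign(\mu_{1}(\ttheta))=1$, and validity over $\Theta(\eta)$ yields $P_{\ttheta}(1\in CS(W_{n}))\geq\inf_{\theta'\in\Theta(\eta)}P_{\theta'}(\sign(\mu_{1}(\theta'))\in CS(W_{n}))$. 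The crucial transfer step is observational equivalence (part (1) of the theorem): since $CS(W_{n})$ depends on the data alone and $P_{\theta}$, $P_{\ttheta}$ induce the same law for $W_{n}$, we get $P_{\theta}(1\in CS(W_{n}))=P_{\ttheta}(1\in CS(W_{n}))$. Taking the infimum over $\theta\in\Theta_{*}$ and then the $\liminf$ gives $\liminf_{n}\inf_{\theta\in\Theta_{*}}P_{\theta}(1\in CS(W_{n}))\geq1-\alpha$. Finally I would combine the two events by Bonferroni: for each $\theta$,
\[
P_{\theta}(\{-1,1\}\subset CS(W_{n}))\geq P_{\theta}(-1\in CS(W_{n}))+P_{\theta}(1\in CS(W_{n}))-1,
\]
and then use $\inf(f+g)\geq\inf f+\inf g$ together with $\liminf(a_{n}+b_{n})\geq\liminf a_{n}+\liminf b_{n}$ to descend to the two inequalities above, producing $(1-\alpha)+(1-\alpha)-1=1-2\alpha$.

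The content of the corollary is carried entirely by Theorem \ref{thm: key imps}, so there is no hard analytic obstacle; the points requiring care are bookkeeping. I would be most careful about (i) the fact that $\Theta_{*}$, $\Theta(\eta)$, $\beta$, and the pairing $\theta\mapsto\ttheta$ all drift with $n$, so each inequality must be read as holding for large $n$ and then passed through the $\liminf$, and (ii) the direction of the observational-equivalence transfer, which is legitimate precisely because a confidence set is a function of $W_{n}$ alone and cannot distinguish $P_{\theta}$ from $P_{\ttheta}$ — this is exactly what forces any $\Theta(\eta)$-valid set to cover $-1$ and $1$ simultaneously on $\Theta_{*}$.
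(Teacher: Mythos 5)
Your proposal is correct and follows essentially the same route as the paper's proof: invoke Theorem \ref{thm: key imps} (whose hypotheses hold for large $n$ since $|\beta|/\eta\rightarrow0$), use observational equivalence of $\theta$ and $\ttheta$ together with $\mu_{1}(\theta)=\beta<0$ and $\mu_{1}(\ttheta)>0$ to lower-bound both $P_{\theta}(-1\in CS(W_{n}))$ and $P_{\theta}(1\in CS(W_{n}))$ by the uniform coverage over $\Theta(\eta)$, and finish with Bonferroni. The only cosmetic difference is that the paper combines the two bounds pointwise at a fixed $\theta_{0}$ before taking $\inf$ and $\liminf$, whereas you pass each coverage inequality through the $\liminf$ separately and then add; both orderings are valid.
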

The assumption of $k_{2}=P(D_{i}=1\mid Z_{i}=0)$ being fixed models
the situation of $P(D_{i}=1\mid Z_{i}=0)$ being far from zero. The
strong IV condition corresponds to the requirement of $k_{1}-k_{2}$
being fixed and positive. There are some important implications of
Corollary \ref{cor: adaptivity}. The following discussions are for
$\eta\rightarrow0$. The same argument obvious holds if $\eta=k_{2}$. 

First, when $n^{-1/2}\ll|\beta|\ll\eta\ll1$, the data allows us to
distinguish $|\beta|$ from zero, but it is still impossible to consistently
estimate the sign of $\mu_{1}$ over $\Theta(\eta)$. To see this,
consider an argument by contradiction. Suppose that there exists a
consistent estimator, a function $\sigma$ that maps $W_{n}$ to $\{-1,1,0\}$
and $\inf_{\theta\in\Theta(\eta)}P_{\theta}(\mu_{1}(\theta)=\sigma(W_{n}))\geq1-o(1)$.
In other words, $\sigma(W_{n})$ is only one value in $\{-1,0,1\}$.
Then we can set $CS(W_{n})=\{\sigma(W_{n})\}$ and the assumption
of Corollary \ref{cor: adaptivity} holds with an arbitrary $\alpha$,
say $\alpha=0.05$. The conclusion of Corollary \ref{cor: adaptivity}
says that with asymptotic probability at least 90\%, $CS(W_{n})=\{\sigma(W_{n})\}$
contains at least two elements, which is impossible since by construction
$\{\sigma(W_{n})\}$ is always a singleton. Hence, no consistent estimator
for $\sign(\mu_{1}(\theta))$ exists on $\Theta(\eta)$ when $n^{-1/2}\ll|\beta|\ll\eta\ll1$.

Second, clever specification tests (for monotonicity or $\eta\gg|\beta|$)
or other data-dependent procedures might not be able to address the
instability arising from a slight violation of the monotonicity condition.
One common purpose of specification tests is to allow us to handle
the problem based on the result of the tests. For example, when the
test tells us the monotonicity fails, we use a cautious set, say $\{-1,0,1\}$,
as the confidence set for $\sign(\mu_{1})$; when the test tells us
that the monotonicity holds, we use $\{\sign(\beta)\}$ as the confidence
set for $\sign(\mu_{1})$. Then by Corollary \ref{cor: adaptivity},
if this confidence set has uniform validity\footnote{One might wonder whether the requirement of uniform validity is too
stringent. It turns out that a similar result holds even if we replace
uniform validity with pointwise validity.} over $\Theta(\eta)$, the confidence set must be uninformative for
$\sign(\mu_{1})$ on the nice set $\Theta_{*}$. If this confidence
set does not have uniform validity over $\Theta(\eta)$, then one
might question why we want to use a specification test in the first
place. Therefore, for the purpose of learning $\sign(\mu_{1})$, even
if we know that $|\mu_{1}(\theta)|\gg n^{-1/2}$, the monotonicity
condition is not really testable even when the alternative is only
a slight violation of monotonicity ($\eta\rightarrow0$).

Third, Corollary \ref{cor: adaptivity} implies a severe lack of adaptivity.
It states that it is impossible to be valid over the bigger set $\Theta(\eta)$
while maintaining efficiency on the nice set $\Theta_{*}$. Hence,
requiring validity over $\Theta(\eta)$ necessarily causes loss of
efficiency on $\Theta_{*}$. Notice that the loss of efficiency is
not on some points in $\Theta_{*}$. The efficiency loss occurs at
\textbf{every} point in $\Theta_{*}$; note that the second inequality
in Corollary \ref{cor: adaptivity} has $\inf_{\theta\in\Theta_{*}}$
rather than $\sup_{\theta\in\Theta_{*}}$. Therefore, the trade-off
of robustness and efficiency is quite stark.

The condition of $\eta\gg|\beta|$ turns out to define the boundary
of a ``phase transition''. We have seen that if we allow for $\eta\gg|\beta|$,
it is impossible to actually learn $\sign(\mu_{1})$. On the other
hand, we can show that if $\eta\ll|\beta|$, learning the sign of
LATE is trivial: $\sign(\mu_{1})=\sign(\beta)$. To see this, notice
that $|\mu_{2}(\theta)|\leq2M$ (since $P_{\theta}(|Y_{i}|\leq M)=1$).
Since $\beta=(\mu_{1}(\theta)b-\mu_{2}(\theta)c)/(b-c)$, it follows
that 
\[
\mu_{1}(\theta)=\lambda\mu_{2}(\theta)+(1-\lambda)\beta\leq2M\lambda+(1-\lambda)\beta,
\]
with $\lambda=c/b$. Since $\lambda=c/(k_{1}-k_{2}+c)$ and $c\leq\eta\ll|\beta|$,
we have that $\lambda=o(|\beta|)$. This means that 
\[
\mu_{1}(\theta)\leq o(|\beta|)+(1-o(|\beta|))\beta=\beta(1+o(1)).
\]

By $\beta<0$, we have $\sign(\mu_{1}(\theta))=\sign(\beta)$ asymptotically.
We now summarize these results.
\begin{thm}[Phase transition]
\label{thm: phase transition}Let $M,\varepsilon_{1},\varepsilon_{2},k_{1},k_{2}>0$
be any fixed constants such that $k_{1}-k_{2}>0$. Assume that $\beta<0$,
$\eta\rightarrow0$ and $|\beta|\rightarrow0$. \\
(1) If $\eta\gg|\beta|$, then for any $CS(W_{n})$ satisfying
\[
\liminf_{n\rightarrow\infty}\inf_{\theta\in\Theta(\eta)}P_{\theta}\left(\sign(\mu_{1}(\theta))\in CS(W_{n})\right)\geq1-\alpha,
\]
with $\alpha\in(0,1)$, we have
\[
\liminf_{n\rightarrow\infty}\inf_{\theta\in\Theta_{*}}P_{\theta}\left(\{-1,1\}\subset CS(W_{n})\right)\geq1-2\alpha.
\]
(2) If $\eta\ll|\beta|$, then 
\[
\liminf_{n\rightarrow\infty}\inf_{\theta\in\Theta(\eta)}P_{\theta}\left(\sign(\mu_{1}(\theta))=\sign(\beta)\right)=1.
\]
\end{thm}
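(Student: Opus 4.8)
The plan is to treat the two parts separately, as they are of entirely different natures: part (1) is an adaptivity/impossibility statement about confidence sets, whereas part (2) is a deterministic identification fact that needs no probabilistic input at all.

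For part (1), I would simply note that its hypotheses coincide exactly with those of Corollary \ref{cor: adaptivity}. The condition $\eta \gg |\beta|$ is literally $|\beta| \ll \eta$, and the remaining assumptions (namely $\beta < 0$, $\eta \to 0$, $|\beta| \to 0$, fixed $M,\varepsilon_1,\varepsilon_2,k_1,k_2$ with $k_1 - k_2 > 0$) together with the validity requirement on $CS(W_n)$ over $\Theta(\eta)$ are word-for-word identical, as is the conclusion on $\Theta_*$. Hence part (1) is nothing but a restatement of Corollary \ref{cor: adaptivity} (which in turn rests on the observational-equivalence construction of Theorem \ref{thm: key imps}), and there is nothing further to prove.

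For part (2), I would argue directly and uniformly over $\Theta(\eta)$. Fix $\theta = (a,b,c,H) \in \Theta(\eta)$. From $a + b = k_1$ and $a + c = k_2$ I obtain $b - c = k_1 - k_2 > 0$ and $b = (k_1 - k_2) + c > 0$, so dividing by $b$ is legitimate. Solving the identity $\beta = (\mu_1(\theta) b - \mu_2(\theta) c)/(b - c)$ for $\mu_1(\theta)$ produces the representation $\mu_1(\theta) = (1 - \lambda)\beta + \lambda\mu_2(\theta)$ with $\lambda = c/b \in [0,1)$, whence $\mu_1(\theta) - \beta = \lambda(\mu_2(\theta) - \beta)$ and $|\mu_1(\theta) - \beta| \le \lambda(|\mu_2(\theta)| + |\beta|)$. (When $c = 0$ this reduces to $\mu_1(\theta) = \beta$; when $c > 0$ the defier LATE $\mu_2(\theta)$ is well-defined and the bound is the content.) Two uniform estimates then close the argument: $P_\theta(|Y_i| \le M) = 1$ forces $|\mu_2(\theta)| \le 2M$, and $c \le \eta$ together with $b \ge k_1 - k_2$ gives $\lambda \le \eta/(k_1 - k_2)$, both with constants free of $\theta$. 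Therefore
\[
\sup_{\theta \in \Theta(\eta)} \frac{|\mu_1(\theta) - \beta|}{|\beta|} \;\le\; \frac{\eta}{(k_1 - k_2)|\beta|}\bigl(2M + |\beta|\bigr) \longrightarrow 0,
\]
using $\eta/|\beta| \to 0$, $|\beta| \to 0$, and $k_1 - k_2$ fixed and positive. Consequently, for all large $n$, $\mu_1(\theta) < 0$ and hence $\sign(\mu_1(\theta)) = -1 = \sign(\beta)$, simultaneously for every $\theta \in \Theta(\eta)$.

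The one point demanding care --- and the reason I would phrase the estimate as a supremum over $\Theta(\eta)$ --- is that the event $\{\sign(\mu_1(\theta)) = \sign(\beta)\}$ depends on $\theta$ alone and not on the data, so for each $\theta$ its probability is either $0$ or $1$. The claim is thus really about the geometry of $\Theta(\eta)$: I must check that once $n$ is large the sign identity holds for the entire parameter space at once, which is exactly what the displayed uniform bound delivers. Given this, $\inf_{\theta \in \Theta(\eta)} P_\theta(\sign(\mu_1(\theta)) = \sign(\beta)) = 1$ for all large $n$, so the $\liminf$ equals $1$. I anticipate no genuine obstacle beyond keeping the bounds on $\lambda$ and $|\mu_2(\theta)|$ uniform; the boundedness of $Y_i$ and the strong-IV gap $k_1 - k_2 > 0$ are precisely what render them uniform.
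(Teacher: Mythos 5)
Your proposal is correct and follows essentially the same route as the paper: part (1) is indeed just Corollary \ref{cor: adaptivity} restated (whose proof rests on Theorem \ref{thm: key imps}), and part (2) is the paper's own argument via $\mu_{1}(\theta)=\lambda\mu_{2}(\theta)+(1-\lambda)\beta$ with $\lambda=c/(k_{1}-k_{2}+c)\leq\eta/(k_{1}-k_{2})=o(|\beta|)$ and $|\mu_{2}(\theta)|\leq2M$, merely rewritten as the equivalent bound $|\mu_{1}(\theta)-\beta|\leq\lambda(2M+|\beta|)$. Your explicit remarks on uniformity over $\Theta(\eta)$ and on the event being deterministic are correct refinements of what the paper leaves implicit.
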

By Theorem \ref{thm: phase transition}, the magnitude of $|\beta|$
serves as the boundary (in rate) of phase transition. A slight violation
may or may not be a huge problem depending on the order of magnitude
of the violation $\eta$. If $|\beta|\ll\eta$, even imposing the
extra condition of $\sign(\mu_{1}(\theta))=\sign(\mu_{2}(\theta))$
does not help with learning the sign of $\mu_{1}$. In contrast, if
$|\beta|\gg\eta$, we can easily learn $\sign(\mu_{1}(\theta))$ without
assuming $\sign(\mu_{1}(\theta))=\sign(\mu_{2}(\theta))$; the proof
of the second part of Theorem \ref{thm: phase transition} does not
rely on this condition. 

Since $\Theta(\eta)$ allows for a large class of distributions, it
is difficult to say much more than the rate. However, when the outcome
variable is binary, we can precisely determine the boundary for the
phase transition.

\subsubsection{\label{subsec: small violation of mono binary}Exact boundary of
the phase transition for binary outcomes}

We define the counterparts of $\Theta(\eta)$ and $\Theta_{*}$ for
the binary outcomes. Let 
\begin{multline}
\Theta_{binary}(\eta)=\biggl\{\theta=(a,b,c,H):\ a,b,c\in[0,1],\ a+b+c\in[0,1],\ a+b=k_{1},\ a+c=k_{2},\\
P_{\theta}(Y_{i}=D_{i}=1\mid Z_{i}=0),P_{\theta}(Y_{i}=D_{i}=0\mid Z_{i}=1)\geq\varepsilon,\\
P_{\theta}(Y_{i}\in\{0,1\})=1,\ 0\leq c\leq\eta\biggr\},\label{eq: binary para space}
\end{multline}
where $\varepsilon>0$ is a constant. The requirement that $P_{\theta}(Y_{i}=D_{i}\mid Z_{i})$
be bounded away from zero and one is mild in many applications. 
\begin{thm}
\label{thm: phase trans binary easy}Let $k_{1},k_{2},\varepsilon\in(0,1)$
be given constants such that $k_{1}-k_{2}>0$. Suppose that $\eta\in[0,k_{2}]$
and $\beta$ satisfy $\beta<0$, $|\beta|\rightarrow0$ and $\eta\rightarrow0$.
\\
(1) If $\eta\geq|\beta|(k_{1}-k_{2})$, then there does not exist
any estimator of $\mu_{1}(\theta)$ that is consistent uniformly over
$\Theta_{binary}(\eta)$.\\
(2) If $\eta<|\beta|(k_{1}-k_{2})$, then $\sign(\mu_{1}(\theta))=\sign(\beta)$
for any $\theta\in\Theta_{binary}(\eta)$.
\end{thm}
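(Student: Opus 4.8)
The plan is to handle the two parts by completely different devices: part (2) is a one-line algebraic consequence of the LATE representation (\ref{eq: IV beta}) together with the bound $\mu_2(\theta)\le 1$ that is special to binary outcomes, whereas part (1) is an exact observational-equivalence (two-point) argument in which I match every cell probability of the observed multinomial law on $(Y_i,D_i,Z_i)\in\{0,1\}^3$.

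For part (2) I would start from the rearrangement already used for Theorem \ref{thm: phase transition}. Since $a+b=k_1$ and $a+c=k_2$ give $b-c=k_1-k_2$ and $b=k_1-k_2+c$, equation (\ref{eq: IV beta}) yields
\[
\mu_1(\theta)=\frac{(k_1-k_2)\beta+c\,\mu_2(\theta)}{k_1-k_2+c}.
\]
Because $Y_i\in\{0,1\}$ forces $Y_i(1)-Y_i(0)\in\{-1,0,1\}$, we have $\mu_2(\theta)\le 1$; as $k_1-k_2+c>0$, the numerator is at most $(k_1-k_2)\beta+c$, which is negative exactly when $c<|\beta|(k_1-k_2)$. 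This holds since $c\le\eta<|\beta|(k_1-k_2)$, so $\mu_1(\theta)<0$, i.e. $\sign(\mu_1(\theta))=\sign(\beta)$. No further distributional structure beyond the binary bound on $\mu_2$ is used.

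For part (1) I would fix any monotone configuration $\theta$ (with $c=0$, $a=k_2$, $b=k_1-k_2$, never-taker mass $1-k_1$) whose complier outcome probabilities $p_{C1}=P_\theta(Y_i(1)=1\mid\text{complier})$ and $p_{C0}=P_\theta(Y_i(0)=1\mid\text{complier})$ satisfy $p_{C1}-p_{C0}=\beta<0$. I then build $\ttheta$ with $c=\eta$ (so $a=k_2-\eta$, $b=k_1-k_2+\eta$, never-taker mass $1-k_1-\eta$) by placing the new defiers at the extreme $\mu_2(\ttheta)=1$ and re-balancing the always-taker, never-taker and complier outcome probabilities so that the four observable quantities $P(Y_i=1,D_i=d\mid Z_i=z)$ are left unchanged. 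Matching the two cells $P(Y_i=1,D_i=1\mid Z_i=1)$ and $P(Y_i=1,D_i=0\mid Z_i=0)$ forces
\[
\tilde p_{C1}=\frac{(k_1-k_2)p_{C1}+\eta}{k_1-k_2+\eta},\qquad \tilde p_{C0}=\frac{(k_1-k_2)p_{C0}}{k_1-k_2+\eta},
\]
so that $\mu_1(\ttheta)=\tilde p_{C1}-\tilde p_{C0}=\dfrac{(k_1-k_2)\beta+\eta}{k_1-k_2+\eta}$. Under the hypothesis $\eta\ge|\beta|(k_1-k_2)$ the numerator is nonnegative, hence $\mu_1(\ttheta)\ge 0>\beta=\mu_1(\theta)$: the two laws are observationally identical but disagree on $\mu_1$ by the strictly positive gap $\eta(1-\beta)/(k_1-k_2+\eta)$ and even on the sign. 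Since the observed data is multinomial on eight cells and all cell probabilities (and $k_1,k_2$) coincide, any estimator $\hat\mu_{1}$ has the same law under $P_\theta$ and $P_{\ttheta}$; the standard two-point inequality then gives $\max\{P_\theta(|\hat\mu_1-\mu_1(\theta)|>\delta),\,P_{\ttheta}(|\hat\mu_1-\mu_1(\ttheta)|>\delta)\}\ge 1/2$ for $\delta$ below half the gap, ruling out uniform consistency over $\Theta_{binary}(\eta)$.

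The main obstacle is the bookkeeping that certifies $\ttheta\in\Theta_{binary}(\eta)$: I must check $\tilde p_{AT},\tilde p_{NT},\tilde p_{C1},\tilde p_{C0}\in[0,1]$, the mass constraints $a,b,c\ge 0$ with $a+b+c\le 1$, and the two overlap conditions $P_{\ttheta}(Y_i=D_i=1\mid Z_i=0)\ge\varepsilon$, $P_{\ttheta}(Y_i=D_i=0\mid Z_i=1)\ge\varepsilon$. This is exactly where the structural assumptions enter: $\tilde p_{C1}$ and $\tilde p_{C0}$ are automatically convex reallocations of $p_{C1},p_{C0}$ and $1$, while the shifts in $\tilde p_{AT}$ and $\tilde p_{NT}$ absorb the defier mass and stay feasible because the overlap bounds give $P_\theta(Y_i=D_i=1\mid Z_i=0)\ge\varepsilon$ and $P_\theta(Y_i=D_i=0\mid Z_i=1)\ge\varepsilon$, both strictly larger than $\eta$ once $\eta<\min\{\varepsilon,k_2,1-k_1\}$, which holds eventually since $\eta\to 0$. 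Verifying these inequalities, and confirming that they pin down the boundary $\eta=|\beta|(k_1-k_2)$ precisely (matching part (2)), is the routine but delicate step I would carry out last.
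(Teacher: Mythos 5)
Your proposal follows essentially the same route as the paper: part (2) is the same rearrangement of (\ref{eq: IV beta}) combined with the binary bound $\mu_{2}(\theta)\leq1$ (the paper's Lemma \ref{lem: binary lem part 2}), and part (1) is the same observational-equivalence construction as Lemma \ref{lem: binary lem part 1} --- your extreme choice of placing the new defiers at $\tilde{r}_{0,1}=1$, $\tilde{t}_{0,1}=0$ is exactly what the paper's choices reduce to for large $n$, with feasibility secured, as you say, by the overlap conditions $P_{\theta}(Y_{i}=D_{i}=1\mid Z_{i}=0),P_{\theta}(Y_{i}=D_{i}=0\mid Z_{i}=1)\geq\varepsilon$ together with $\eta\rightarrow0$.

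One caveat on your concluding step: the gap $\eta(1-\beta)/(k_{1}-k_{2}+\eta)$ tends to zero (since $\eta\rightarrow0$), so a two-point bound ``for $\delta$ below half the gap'' does not rule out uniform consistency at any \emph{fixed} tolerance --- indeed the IV estimand itself satisfies $\sup_{\theta\in\Theta_{binary}(\eta)}|\mu_{1}(\theta)-\beta|\rightarrow0$, so value-consistency is not the right target. The content of part (1), and what the paper actually proves (by rerunning the confidence-set argument of Corollary \ref{cor: adaptivity}), is the impossibility of consistently estimating $\sign(\mu_{1}(\theta))$; your pair does deliver the sign flip ($\mu_{1}(\theta)=\beta<0$ versus $\mu_{1}(\ttheta)\geq0$), so the two-point argument should be run on the sign (or on a confidence set for it) rather than on the value of $\mu_{1}$.
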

When the outcome variable is not binary, we can dichotomize it to
binary variables. We can define the new outcome variable $\tY_{i}(1)=\oneb\{Y_{i}(1)\geq y\}$
and $\tY_{i}(0)=\oneb\{Y_{i}(0)\geq y\}$, where $y$ is given. Then
the treatment effect is how much the treatment changes the probability
of $Y_{i}\geq y$. 

For example, in \citet{angrist1998children}, one outcome variable
of interest $Y_{i}$ is the number of weeks a person worked in a year.
We can set $y=1$ and ask how the treatment changes the probability
of a person working for at least one week. Once we do this, we can
estimate $|\beta|(k_{1}-k_{2})$, the boundary of phase transition.
The results are in Table \ref{tab: AK98}. We see that any tolerance
level of $c$ above $0.52\%$ can cause a serious problem for the
question of whether or not the LATE is negative. Hence, even if the
proportion of defiers is known to be at most 1\%, it might not be
obvious that we can safely conclude a negative LATE.

\begin{table}
\caption{\label{tab: AK98}Estimating the phase-transition boundary using data
in \citet{angrist1998children}}

\begin{centering}
\begin{tabular}{crc}
 &  & \tabularnewline
 & Point estimate & \tabularnewline
\cline{1-2} \cline{2-2} 
$\beta$ & -0.0950 & \tabularnewline
$|\beta|(k_{1}-k_{2})$ & 0.0052 & \tabularnewline
 &  & \tabularnewline
\end{tabular}
\par\end{centering}
{\small{}The outcome variable is whether or not a person worked for
at least one week in the year.}{\small\par}

\end{table}

\subsection{\label{subsec: small violation of one-sided noncomp}What about $P(D_{i}=1\mid Z_{i}=0)\approx0$?}

In many studies, the absence of defiers is justified by the one-sidedness
of non-compliance. For example, $Z_{i}\in\{0,1\}$ is the randomly
assigned treatment and $D_{i}$ is the actually treatment status.
When the compliance is not perfect (i.e., $P(Z_{i}=D_{i})<1$), the
non-compliance is often one-sided: $P(D_{i}=1\mid Z_{i}=1)<1$ but
$P(D_{i}=1\mid Z_{i}=0)=0$. However, we discuss a small violation
to this ideal case $P(D_{i}=1\mid Z_{i}=0)\approx0$, see JPTA in
Table \ref{tab: empirical} as an example.

Here, we provide a discussion for the case of $k_{2}=P(D_{i}=1\mid Z_{i}=0)\rightarrow0$
in the case of binary outcomes. This is different from Theorem \ref{thm: phase trans binary easy},
which assumes that $k_{2}$ is bounded away from zero. Moreover, when
$k_{2}\rightarrow0$, the natural choice of $\eta$ is $\eta=k_{2}$.
To analyze this case, we consider the following parameter space
\[
\Theta_{binary,*}=\biggl\{\theta=(a,b,c,H)\in\Theta_{binary}(k_{2}):\ P_{\theta}(Y_{i}=D_{i}=1\mid Z_{i}=0)<|\beta|(k_{1}-k_{2})\biggr\},
\]
where $\Theta_{binary}(\cdot)$ is defined in (\ref{eq: binary para space}).
\begin{thm}
\label{thm: phase trans binary hard}Let $k_{1},\varepsilon\in(0,1)$
be given constants. Suppose that $k_{2},|\beta|\rightarrow0$ and
$\beta<0$.\\
(1) there does not exist any estimator of $\mu_{1}(\theta)$ that
is consistent over $\Theta_{binary}(k_{2})\backslash\Theta_{binary,*}$.\\
(2) $\sign(\mu_{1}(\theta))=\sign(\beta)$ for any $\theta\in\Theta_{binary,*}$.
\end{thm}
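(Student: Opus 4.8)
The plan is to treat the two parts separately: part (2) by a direct algebraic bound of the kind used in the second halves of Theorems \ref{thm: phase transition} and \ref{thm: phase trans binary easy}, and part (1) by building, by hand, two observationally equivalent binary data-generating processes whose compliers carry opposite-signed effects.

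For part (2), I would rearrange the identity (\ref{eq: IV beta}) as $\mu_1(\theta)\,b=\beta(b-c)+\mu_2(\theta)\,c$ and note that $b-c=k_1-k_2>0$. Writing $p_c:=E_\theta(Y_i(1)\mid\text{defier})$ and $q_c:=E_\theta(Y_i(0)\mid\text{defier})$, the binary restriction $Y_i\in\{0,1\}$ gives $p_c,q_c\in[0,1]$, so $\mu_2(\theta)\,c=(p_c-q_c)\,c\le p_c\,c$. The observable cell decomposes as $P_\theta(Y_i=D_i=1\mid Z_i=0)=a\,E_\theta(Y_i(1)\mid\text{AT})+c\,p_c\ge c\,p_c$, hence $\mu_2(\theta)\,c\le P_\theta(Y_i=D_i=1\mid Z_i=0)$. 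On $\Theta_{binary,*}$ the defining inequality $P_\theta(Y_i=D_i=1\mid Z_i=0)<|\beta|(k_1-k_2)=-\beta(b-c)$ then forces $\mu_1(\theta)\,b\le\beta(b-c)+P_\theta(Y_i=D_i=1\mid Z_i=0)<\beta(b-c)-\beta(b-c)=0$; since $b=k_1-a\ge k_1-k_2>0$, this yields $\mu_1(\theta)<0$, whence $\sign(\mu_1(\theta))=-1=\sign(\beta)$.

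For part (1), I would first observe that Theorem \ref{thm: key imps} cannot be invoked directly, since its hypothesis $\eta<\varepsilon_1\min\{k_2,1-k_1,k_1-k_2\}$ fails for the natural choice $\eta=k_2$. Instead I construct a fresh pair along a sequence with $k_2\to0$, $|\beta|\to0$ and $|\beta|\ll k_2$. In $\ttheta$ I place all of the $\{D_i=1,Z_i=0\}$ mass on defiers, so $c=k_2$, $a=0$, $b=k_1$, and give the defiers a positive effect by taking $p_c=1/2$, $q_c=0$ (hence $\mu_2=1/2$); in $\theta$ I reallocate that mass to always-takers, so $c=0$, $a=k_2$, $b=k_1-k_2$. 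Matching the four observed cell probabilities $P(Y_i=1,D_i=d\mid Z_i=z)$ pins down the remaining outcome rates via three linear equations ($p_a'=p_c$, and the complier and never-taker control rates adjust accordingly). A short computation gives $\mu_1(\theta)=\beta=-|\beta|<0$ (automatic because $c=0$ makes $\beta=\mu_1$) and $\mu_1(\ttheta)=[k_2/2-(k_1-k_2)|\beta|]/k_1>0$ as soon as $|\beta|<k_2/(2(k_1-k_2))$, which holds on the chosen sequence. Both DGPs share $P(Y_i=D_i=1\mid Z_i=0)=k_2/2\ge|\beta|(k_1-k_2)$, so both lie in the complement $\Theta_{binary}(k_2)\setminus\Theta_{binary,*}$, and choosing the never-taker control rate equal to $1/2$ keeps the conditional overlap of Remark \ref{rem: overlap} bounded away from $0$ and $1$.

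With the pair in hand, observational equivalence closes the argument: any estimator $T_n$ has the same law under $\theta$ and $\ttheta$, so if $T_n$ consistently recovered $\sign(\mu_1)$ we would need $P_{\ttheta}(T_n=1)\to1$ and $P_{\theta}(T_n=-1)\to1$ simultaneously, while $P_{\theta}(T_n=-1)=P_{\ttheta}(T_n=-1)\to0$, a contradiction. I expect the main obstacle to be the simultaneous bookkeeping in the construction: keeping every type probability and every outcome rate in $[0,1]$, matching all four observed cells, and landing both DGPs in the complement of $\Theta_{binary,*}$ at once. This is exactly what forces the rate choice $|\beta|\ll k_2$ and a careful allocation of the outcome probabilities; the only genuinely delicate point is verifying the support conditions of $\Theta_{binary}(k_2)$ in the degenerate regime $k_2\to0$, where $P(Y_i=D_i=1\mid Z_i=0)\le k_2\to0$ necessarily, which I would handle by reading those conditions in the conditional-mean sense of Remark \ref{rem: overlap}.
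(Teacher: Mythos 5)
Your part (2) is essentially the paper's own argument (its Lemma \ref{lem: binary lem experiment part 2}): both proofs reduce to showing $c\mu_{2}(\theta)<-\beta(k_{1}-k_{2})$ and both get there via the chain $c\mu_{2}(\theta)\leq c\,r_{0,1}\leq r_{1,1}(k_{2}-c)+r_{0,1}c=P_{\theta}(Y_{i}=1,D_{i}=1\mid Z_{i}=0)$. Your reading of the support condition is also vindicated: the paper's Lemmas \ref{lem: binary lem experiment part 1} and \ref{lem: binary lem experiment part 2} simply drop the requirement $P_{\theta}(Y_{i}=D_{i}=1\mid Z_{i}=0)\geq\varepsilon$ (which indeed cannot hold with fixed $\varepsilon$ once $k_{2}<\varepsilon$) and retain only $P_{\theta}(Y_{i}=D_{i}=0\mid Z_{i}=1)\geq\varepsilon$.

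For part (1) you take a genuinely different route, and it has a coverage gap. The paper (Lemma \ref{lem: binary lem experiment part 1}) starts from an \emph{arbitrary} $\theta$ in the complement with $c=0$ and perturbs it, choosing $\tc=P_{\theta}(Y_{i}=1,D_{i}=1\mid Z_{i}=0)$ and giving the defiers $\tr_{0,1}=1$, $\ttt_{0,1}=0$, so that $\tc\,\mu_{2}(\ttheta)$ equals exactly the observable cell $P_{\theta}(Y_{i}=1,D_{i}=1\mid Z_{i}=0)$ --- the maximum value your own part-(2) bound allows. Hence $\mu_{1}(\ttheta)\geq0$ holds precisely on the complement of $\Theta_{binary,*}$, whatever the relative rates of $|\beta|$ and $k_{2}$. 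Your construction instead hard-codes $c=k_{2}$, $p_{c}=1/2$, $q_{c}=0$, which forces $P(Y_{i}=D_{i}=1\mid Z_{i}=0)=k_{2}/2$ and $\tc\,\mu_{2}=k_{2}/2$, so your pair lies in the complement and has $\mu_{1}(\ttheta)>0$ only when $|\beta|(k_{1}-k_{2})\leq k_{2}/2$. The theorem quantifies over all sequences $k_{2},|\beta|\rightarrow0$; for a sequence with $|\beta|(k_{1}-k_{2})\in(k_{2}/2,k_{2}]$ the complement is nonempty but your specific pair is not in it, so as written your part (1) proves a weaker statement than claimed. The fix is exactly the paper's move: do not fix $p_{c}=1/2$, but set the defier effect to its maximum ($p_{c}=1$, $q_{c}=0$) and let $\tc$ equal the observed cell probability, so that the positivity of $\mu_{1}(\ttheta)$ is equivalent to membership in the complement. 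Your bookkeeping for the matching equations and the observational-equivalence contradiction at the end are otherwise correct.
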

We notice that $P_{\theta}(Y_{i}=D_{i}=1\mid Z_{i}=0)<|\beta|(k_{1}-k_{2})$
(the boundary in Theorem \ref{thm: phase trans binary hard}) is not
the same as applying $\eta=k_{2}$ to Theorem \ref{thm: phase trans binary easy}.
Applying $\eta=k_{2}$ to $\eta<|\beta|(k_{1}-k_{2})$ in Theorem
\ref{thm: phase trans binary easy} leads to $k_{2}<|\beta|(k_{1}-k_{2})$.
However, $P_{\theta}(Y_{i}=D_{i}=1\mid Z_{i}=0)\leq k_{2}$. To see
this, observe that $P_{\theta}(Y_{i}=D_{i}=1\mid Z_{i}=0)=E_{\theta}(Y_{i}D_{i}\mid Z_{i}=0)\leq E_{\theta}(D_{i}\mid Z_{i}=0)=k_{2}$.
What this means in practice is that Theorem \ref{thm: phase trans binary hard}
makes it easier to be on the ``nice'' side of the boundary; instead
of requiring $|\beta|(k_{1}-k_{2})$ to be above $k_{2}$, we require
it to be above $P_{\theta}(Y_{i}=D_{i}=1\mid Z_{i}=0)$. 

A more important implication of Theorem \ref{thm: phase trans binary hard}
is that it is possible to check whether or not we are on the nice
side of the phase-transition boundary. The set $\Theta_{binary,*}$
is defined by the testable condition 
\[
P_{\theta}(Y_{i}=D_{i}=1\mid Z_{i}=0)<|\beta|(k_{1}-k_{2}).
\]

We apply this to the JPTA study. We set the outcome variable to be
$\oneb\{\text{income}<\$50000\}$. This means that we study the effect
of treatment on the probability of earning less than \$50000.\footnote{We choose ``less than'' instead of ``more than'' to get a negative
$\beta$. The interpretation is intuitively the same. Receiving treatment
makes it less likely to earn less than \$50000 so it makes it more
likely to earn at least \$50000.} The results are in Table \ref{tab: JPTA}. Based on only the point
estimates, the data-generating process is in $\Theta_{binary,*}$,
which is the ``nice'' side of the phase-transition boundary. 

\begin{table}
\caption{\label{tab: JPTA}Estimating the phase-transition boundary using data
in JPTA}

\begin{centering}
\begin{tabular}{crc}
 &  & \tabularnewline
 & Point estimate & \tabularnewline
\cline{1-2} \cline{2-2} 
$\beta$ & -0.0363 & \tabularnewline
$|\beta|(k_{1}-k_{2})$ & 0.0222 & \tabularnewline
$P_{\theta}(Y_{i}=1,D_{i}=1\mid Z_{i}=0)$ & 0.0157 & \tabularnewline
\end{tabular}
\par\end{centering}
{\small{}The outcome variable is whether or not a person earns less
than \$50000 a year.}{\small\par}
\end{table}

\section{\label{sec: no monotonicity}Learning LATE without monotonicity:
an alternative}

There are already alternatives to the monotonicity condition in the
literature. We add the following discussions.

\subsection{Learning the magnitude without any additional assumption}

We now outline a lower bound for the magnitude of LATE.
\begin{thm}
\label{thm: key}Let Assumption \ref{assu: IV exogeneity} hold. Assume
$cov(D_{i},Z_{i})\neq0$. Then 
\[
\max\{|\mu_{1}|,|\mu_{2}|\}\geq|\beta|\cdot\gamma,
\]
where 
\[
\gamma=\frac{\left|E(D_{i}\mid Z_{i}=1)-E(D_{i}\mid Z_{i}=0)\right|}{E(D_{i}\mid Z_{i}=1)+E(D_{i}\mid Z_{i}=0)}.
\]
\end{thm}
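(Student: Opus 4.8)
The plan is to work directly from the decomposition of $\beta$ in (\ref{eq: IV beta}) and reduce the claim to an elementary bound on a difference of products. First I would rewrite (\ref{eq: IV beta}) as the identity
\[
\beta\,(b-c)=\mu_{1}b-\mu_{2}c,
\]
which is legitimate because the strong-IV hypothesis forces $b\neq c$: under Assumption \ref{assu: IV exogeneity} a short computation gives $cov(D_{i},Z_{i})=P(Z_{i}=1)\,P(Z_{i}=0)\,(k_{1}-k_{2})$, and $k_{1}-k_{2}=(a+b)-(a+c)=b-c$, so $cov(D_{i},Z_{i})\neq0$ is equivalent to $b\neq c$. I would read the products $\mu_{1}b$ and $\mu_{2}c$ as the sub-population expectations $E_{\theta}[(Y_{i}(1)-Y_{i}(0))\oneb\{D_{i}(1)=1,D_{i}(0)=0\}]$ and its defier analogue, so that the identity remains meaningful even when one of $b,c$ vanishes.

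Next I would apply the triangle inequality together with $b,c\geq0$ to get
\[
|\beta|\,|b-c|=|\mu_{1}b-\mu_{2}c|\leq|\mu_{1}|\,b+|\mu_{2}|\,c\leq\max\{|\mu_{1}|,|\mu_{2}|\}\,(b+c).
\]
The final step is bookkeeping in the observable quantities. Since $a\geq0$ we have $b+c\leq 2a+b+c=k_{1}+k_{2}$, while $|b-c|=|k_{1}-k_{2}|$; moreover $k_{1}+k_{2}>0$ because $b\neq c$ forces $b+c>0$. Substituting and dividing gives
\[
\max\{|\mu_{1}|,|\mu_{2}|\}\geq|\beta|\,\frac{|k_{1}-k_{2}|}{k_{1}+k_{2}}=|\beta|\cdot\gamma,
\]
which is the assertion, with $k_{1}=E(D_{i}\mid Z_{i}=1)$ and $k_{2}=E(D_{i}\mid Z_{i}=0)$.

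The argument is elementary, so there is no deep obstacle; the one point that warrants thought — and that I would highlight rather than gloss over — is the deliberate passage from $b+c$ to $k_{1}+k_{2}=2a+b+c$ in the denominator. The sharper inequality $\max\{|\mu_{1}|,|\mu_{2}|\}\geq|\beta|\,|b-c|/(b+c)$ is what the triangle inequality actually delivers, but $b+c$ is not identified from the data (it would require knowing the always-taker share $a$), whereas $\gamma$ depends only on the two directly estimable conditional means $k_{1}$ and $k_{2}$. Inflating the denominator by $2a$ trades the tight but unobservable constant for a looser but \emph{observable} one, which is exactly what makes the bound operational. Finally, I would record the consistency with the text's claim $\gamma\asymp|cov(D_{i},Z_{i})|$: since $|k_{1}-k_{2}|=|cov(D_{i},Z_{i})|/[P(Z_{i}=1)P(Z_{i}=0)]$ and $k_{1}+k_{2}\in(0,2]$, the ratio $\gamma$ is proportional to $|cov(D_{i},Z_{i})|$ up to factors bounded away from $0$ and $\infty$ whenever the instrument's marginal is non-degenerate.
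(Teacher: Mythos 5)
Your proof is correct, and it takes a genuinely different route from the paper's. The paper writes $\mu_{1}=\lambda\mu_{2}+(1-\lambda)\beta$ with $\lambda=c/b\in[0,k_{2}/k_{1}]$ and then \emph{exactly minimizes} the piecewise-linear function $f(\mu_{2},\lambda)=\max\{\lambda\mu_{2}+(1-\lambda)\beta,\,-\lambda\mu_{2}-(1-\lambda)\beta,\,\mu_{2},\,-\mu_{2}\}$ over $(\mu_{2},\lambda)$, locating minimizers at crossings of two of the four linear pieces and computing $\inf f=|\beta|(k_{1}-k_{2})/(k_{1}+k_{2})$. You instead get the bound from the single triangle-inequality chain $|\beta|\,|b-c|=|\mu_{1}b-\mu_{2}c|\leq\max\{|\mu_{1}|,|\mu_{2}|\}\,(b+c)\leq\max\{|\mu_{1}|,|\mu_{2}|\}\,(k_{1}+k_{2})$, which is shorter, more transparent, and even exhibits the sharper (but unidentified) intermediate bound with $b+c$ in the denominator --- a point you rightly flag. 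The one thing the paper's optimization buys that your argument does not deliver for free is the tightness assertion made in the remark following the theorem (``the minimum in the proof can be achieved''); with your approach one must separately trace the equality conditions (take $a=0$ and $\mu_{1},\mu_{2}$ of opposite signs with $|\mu_{1}|=|\mu_{2}|$) to recover that. Since the theorem statement itself only claims the inequality, your proof is a complete and valid substitute; your side remarks (the equivalence of $cov(D_{i},Z_{i})\neq0$ with $b\neq c$, the reading of $\mu_{1}b$ as a subpopulation expectation when $b$ or $c$ vanishes, and $b+c>0$) all check out.
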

Notice that $\gamma\gtrsim|cov(D_{i},Z_{i})|$. Therefore, in the
case of strong instruments, the lower bound $|\beta|\cdot\gamma$
is not too small compared to $|\beta|$. From the proof, we can see
that the lower bound is also tight in that the equality can hold (because
the minimum in the proof can be achieved).

It is worth noting that the lower bound in Theorem \ref{thm: key}
can be related to intent-to-treat (ITT) effects. We notice that 
\[
|\beta|\cdot\gamma=\frac{\left|E(Y_{i}\mid Z_{i}=1)-E(Y_{i}\mid Z_{i}=0)\right|}{E(D_{i}\mid Z_{i}=1)+E(D_{i}\mid Z_{i}=0)}=\frac{|ITT|}{E(D_{i}\mid Z_{i}=1)+E(D_{i}\mid Z_{i}=0)}.
\]

Therefore, the lower bound satisfies $|\beta|\cdot\gamma\geq|ITT|/2$.
Therefore, whenever we find that $\beta\neq0$ or $ITT\neq0$, it
means that the treatment effect is not zero and we can use $|\beta|\cdot\gamma$
as a lower bound. 

\subsection{Learning the sign: $|\mu_{1}|\protect\geq|\mu_{2}|$}

Learning the sign of LATE requires extra restrictions. This is inevitable;
otherwise, monotonicity would be testable in Section \ref{subsec: small violation of mono binary}.
It turns out that simple restrictions such as $|\mu_{1}|\geq|\mu_{2}|$
would suffice. 
\begin{thm}
\label{thm: new ID sign}Let Assumption \ref{assu: IV exogeneity}
hold. Suppose that $\beta\neq0$ and $cov(D_{i},Z_{i})>0$. If $|\mu_{1}|\geq|\mu_{2}|$,
then $\sign(\mu_{1})=\sign(\beta)$. 
\end{thm}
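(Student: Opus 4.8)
The plan is to reduce everything to the identity (\ref{eq: IV beta}) together with a translation of the strong-instrument sign condition $cov(D_i,Z_i)>0$ into an inequality between the complier and defier proportions. First I would observe that, since $Z_i$ is binary, writing $p=P(Z_i=1)$ and using Assumption \ref{assu: IV exogeneity} gives $cov(D_i,Z_i)=p(1-p)\left[E(D_i\mid Z_i=1)-E(D_i\mid Z_i=0)\right]=p(1-p)(k_1-k_2)$, and by the definitions in (\ref{eq: four types}) we have $k_1-k_2=(a+b)-(a+c)=b-c$. Hence $cov(D_i,Z_i)>0$ is equivalent to $b>c$; in particular $b>c\geq0$, so $b>0$.

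Next I would rewrite (\ref{eq: IV beta}) as $\beta(b-c)=\mu_1 b-\mu_2 c$. Because $b-c>0$, this yields $\sign(\beta)=\sign(\mu_1 b-\mu_2 c)$, so it suffices to prove $\sign(\mu_1 b-\mu_2 c)=\sign(\mu_1)$.

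The core of the argument is then a simple domination estimate. I would first rule out $\mu_1=0$: if $\mu_1=0$, then $|\mu_2|\leq|\mu_1|=0$ forces $\mu_2=0$, whence $\beta(b-c)=0$ and $\beta=0$, contradicting $\beta\neq0$; thus $|\mu_1|>0$. Using $0\leq c<b$ and the hypothesis $|\mu_1|\geq|\mu_2|$, I would bound the defier term by the complier term: $|\mu_2 c|=|\mu_2|\,c\leq|\mu_1|\,c<|\mu_1|\,b=|\mu_1 b|$. Since the magnitude of $\mu_2 c$ is strictly smaller than that of $\mu_1 b$, subtracting $\mu_2 c$ cannot flip the sign of $\mu_1 b$, so $\sign(\mu_1 b-\mu_2 c)=\sign(\mu_1 b)=\sign(\mu_1)$, the last equality because $b>0$. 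Combined with the previous paragraph this gives $\sign(\beta)=\sign(\mu_1)$.

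I do not expect a genuine obstacle here; the one point requiring care is the strict inequality $|\mu_1|\,c<|\mu_1|\,b$, which is equivalent to $|\mu_1|(b-c)>0$ and so relies on having first established $b>c$ strictly (which uses $cov(D_i,Z_i)>0$, not merely $\neq0$) and $|\mu_1|>0$ (which uses $\beta\neq0$). These are precisely the two non-degeneracy hypotheses in the statement. The symmetric situation $cov(D_i,Z_i)<0$, which is not needed here, would be handled identically after reversing the inequality to $b<c$.
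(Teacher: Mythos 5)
Your proof is correct. It rests on the same two ingredients as the paper's own argument --- the identity $\beta(b-c)=\mu_{1}b-\mu_{2}c$ from (\ref{eq: IV beta}) and the observation that $cov(D_{i},Z_{i})>0$ forces $b>c\geq0$ --- but the final step is executed differently. The paper argues by contradiction, separately for $\beta>0$ and $\beta<0$: assuming $\mu_{1}$ has the wrong sign, it deduces that $\mu_{2}$ shares that sign, that $|\mu_{1}|\geq|\mu_{2}|$ then orders $\mu_{1}$ and $\mu_{2}$, and that substituting this ordering back into $\mu_{1}=(c/b)\mu_{2}+(1-c/b)\beta$ forces $\mu_{2}$ to have the sign of $\beta$, a contradiction. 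Your route is direct and avoids the case split: having ruled out $\mu_{1}=0$ (else $\mu_{2}=0$ and $\beta=0$), you note $|\mu_{2}c|\leq|\mu_{1}|c<|\mu_{1}|b$, so subtracting $\mu_{2}c$ cannot flip the sign of $\mu_{1}b$, giving $\sign(\beta)=\sign(\mu_{1}b-\mu_{2}c)=\sign(\mu_{1}b)=\sign(\mu_{1})$ in one stroke. A side benefit of your version is that it handles $c=0$ without comment, whereas the paper's intermediate inequality $\mu_{2}\leq-(b-c)\beta/c$ implicitly divides by $c$. You also correctly identify that the strict inequality $|\mu_{1}|c<|\mu_{1}|b$ is exactly where the two non-degeneracy hypotheses enter ($b>c$ from the sign of the covariance, $|\mu_{1}|>0$ from $\beta\neq0$), which is the only point at which the argument could have gone wrong.
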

We should notice that Theorem \ref{thm: new ID sign} imposes more
than $|\mu_{1}|\geq|\mu_{2}|$. The assumption of $cov(D_{i},Z_{i})>0$
is not without loss of generality since the condition of $|\mu_{2}|\geq|\mu_{1}|$
is not enough to identify the sign of LATE. Therefore, the result
should be viewed in the context of the empirical application. %

\bibliographystyle{apalike}
\bibliography{SC_biblio}

\appendix

\section{Proofs}

\subsection{Proofs of Theorem \ref{thm: key imps} and Corollary \ref{cor: adaptivity}}

We first state an auxiliary result. Since it is a result of elementary
computations, the proof is omitted. 
\begin{lem}
\label{lem: basic calc}Let $\theta=(a,b,c,H)$. Then $P_{\theta}(D_{i}=1\mid Z_{i}=1)=a+b$,
$P_{\theta}(D_{i}=1\mid Z_{i}=0)=a+c$ and 

\[
P_{\theta}(Y_{i}\leq y\mid D_{i}=1,Z_{i}=1)=F_{1,1}(y)\frac{a}{a+b}+F_{1,0}(y)\frac{b}{a+b},
\]
\[
P_{\theta}(Y_{i}\leq y\mid D_{i}=1,Z_{i}=0)=F_{1,1}(y)\frac{a}{a+c}+F_{0,1}(y)\frac{c}{a+c},
\]
\[
P_{\theta}(Y_{i}\leq y\mid D_{i}=0,Z_{i}=1)=G_{0,1}(y)\frac{c}{1-a-b}+G_{0,0}(y)\frac{1-a-b-c}{1-a-b},
\]
\[
P_{\theta}(Y_{i}\leq y\mid D_{i}=0,Z_{i}=0)=G_{1,0}(y)\frac{b}{1-a-c}+G_{0,0}(y)\frac{1-a-b-c}{1-a-c}.
\]

Moreover, $F_{d_{1},d_{0}}(y)=H\left(y,\infty,d_{1},d_{0}\right)=P_{\theta}(Y_{i}(1)\leq y\mid D_{i}(1)=d_{1},D_{i}(0)=d_{0})$
and $G_{d_{1},d_{0}}(y)=H\left(\infty,y,d_{1},d_{0}\right)=P_{\theta}(Y_{i}(0)\leq y\mid D_{i}(1)=d_{1},D_{i}(0)=d_{0})$. 
\end{lem}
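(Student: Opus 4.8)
The plan is to express every observable event in terms of the potential outcomes and then invoke Assumption~\ref{assu: IV exogeneity}. The two structural equations $D_{i}=D_{i}(1)Z_{i}+D_{i}(0)(1-Z_{i})$ and $Y_{i}=Y_{i}(1)D_{i}+Y_{i}(0)(1-D_{i})$ yield two facts I would use repeatedly: on $\{Z_{i}=z\}$ we have $D_{i}=D_{i}(z)$, and on $\{D_{i}=d\}$ we have $Y_{i}=Y_{i}(d)$. Hence each cell $\{D_{i}=d,Z_{i}=z\}$ coincides with $\{D_{i}(z)=d,Z_{i}=z\}$, and on this cell the observed outcome equals the single potential outcome $Y_{i}(d)$.

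For the two probability identities, I would condition on $Z_{i}$. On $\{Z_{i}=1\}$ the event $\{D_{i}=1\}$ equals $\{D_{i}(1)=1\}$, and Assumption~\ref{assu: IV exogeneity} gives $P_{\theta}(D_{i}(1)=1\mid Z_{i}=1)=P_{\theta}(D_{i}(1)=1)$; summing the always-taker and complier probabilities gives $a+b$. The identity $P_{\theta}(D_{i}=1\mid Z_{i}=0)=a+c$ is symmetric, using $\{D_{i}=1\}=\{D_{i}(0)=1\}$ on $\{Z_{i}=0\}$ and summing the always-taker and defier probabilities.

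For the four conditional CDFs, the key step is a law-of-total-probability decomposition over the (exactly two) types consistent with each cell. Consider $P_{\theta}(Y_{i}\le y\mid D_{i}=1,Z_{i}=1)$: by the first paragraph this equals $P_{\theta}(Y_{i}(1)\le y\mid D_{i}(1)=1,Z_{i}=1)$, which by independence of $Z_{i}$ equals $P_{\theta}(Y_{i}(1)\le y\mid D_{i}(1)=1)$. Splitting $\{D_{i}(1)=1\}$ into always-takers ($D_{i}(0)=1$) and compliers ($D_{i}(0)=0$) and weighting by the conditional type probabilities $a/(a+b)$ and $b/(a+b)$ produces $F_{1,1}(y)\tfrac{a}{a+b}+F_{1,0}(y)\tfrac{b}{a+b}$. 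The remaining three identities follow the same template: on $\{D_{i}=1,Z_{i}=0\}$ one conditions $Y_{i}(1)$ on $\{D_{i}(0)=1\}$ (always-taker or defier); on the two cells with $D_{i}=0$ the observed outcome is $Y_{i}(0)$, so the relevant objects are the $G_{d_{1},d_{0}}$, and one conditions on $\{D_{i}(1)=0\}$ (never-taker or defier) when $Z_{i}=1$ and on $\{D_{i}(0)=0\}$ (never-taker or complier) when $Z_{i}=0$, with the stated denominators $1-a-b$ and $1-a-c$.

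There is no genuine analytic obstacle here; the only thing to get right is the bookkeeping — matching each $(D_{i},Z_{i})$ cell to the correct pair of types, selecting $F$ versus $G$ according to whether $D_{i}=1$ or $D_{i}=0$, and applying Assumption~\ref{assu: IV exogeneity} to remove the conditioning on $Z_{i}$ before decomposing. This is exactly why the result is elementary and its proof is omitted in the text.
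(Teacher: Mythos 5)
Your proof is correct: each $(D_i,Z_i)$ cell is rewritten as $\{D_i(z)=d, Z_i=z\}$ with $Y_i=Y_i(d)$ there, Assumption~\ref{assu: IV exogeneity} removes the conditioning on $Z_i$, and the law of total probability over the two compatible types yields exactly the stated mixtures with the right weights. The paper omits this proof as an elementary computation, and your argument is precisely that computation, so there is nothing to add.
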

\begin{proof}[\textbf{Proof of Theorem \ref{thm: key imps}}]
Fix an arbitrary $\theta=(a,b,c,H)\in\Theta_{*}$. Let $F_{d_{1},d_{0}}$
and $G_{d_{1},d_{0}}$ be the distribution implied by $H$; see Lemma
\ref{lem: basic calc}. By the definition of $\Theta_{*}$, we have
$a=k_{2}$, $b=k_{1}-k_{2}$ and $c=0$. The rest of the proof proceeds
in three steps.

\textbf{Step 1:} define $\ttheta=(\ta,\tb,\tc,\tH)$. 

We choose $\ttheta=(\ta,\tb,\tc,\tH)$ as follows. Let $\tc=\eta$,
$\tb=k_{1}-k_{2}+\tc$ and $\ta=k_{2}-\tc$. We choose $\tH(y_{1},y_{2},d_{1},d_{0})=\tF_{d_{1},d_{0}}(y)\tG_{d_{1},d_{0}}(y)$
with $\tF_{d_{1},d_{0}}$ and $\tG_{d_{1},d_{0}}$ chosen as follows. 

Since $\varepsilon_{2}>3(k_{1}-k_{2})|\beta|/\eta$, there exists
$\delta$ such that $0<\delta<\varepsilon_{2}-3(k_{1}-k_{2})|\beta|/\eta$.
Let $B_{1}=Q_{1,\theta}(1-\varepsilon_{1})-\delta$. By $c=0$, Lemma
\ref{lem: basic calc} implies that $P_{\theta}(Y_{i}\leq y\mid D_{i}=1,Z_{i}=0)=F_{1,1}(y)$.
Since $Q_{1,\theta}$ is the quantile function of $Y_{i}\mid(D_{i}=1,Z_{i}=0)$,
it is the quantile function of $F_{1,1}$. Since $B_{1}<Q_{1,\theta}(1-\varepsilon_{1}$),
we have $F_{1,1}(B_{1})<1-\varepsilon_{1}$. By $\tc/k_{2}=\eta/k_{2}<\varepsilon_{1}$,
we have $F_{1,1}(B_{1})<1-\varepsilon_{1}<1-\tc/k_{2}$. Define 
\[
\tF_{0,1}(y)=\frac{F_{1,1}(y)-F_{1,1}(B_{1})}{1-F_{1,1}(B_{1})}\cdot\oneb\{y\geq B_{1}\}.
\]

Clearly, $\tF_{0,1}$ is a legitimate cumulative distribution function
(cdf) for a random variable with support in $[-M,M]$, , i.e., non-decreasing
and right-continuous.

Let $B_{2}=Q_{2,\theta}(\varepsilon_{1})$. By $c=0$, Lemma \ref{lem: basic calc}
implies that $P_{\theta}(Y_{i}\leq y\mid D_{i}=0,Z_{i}=1)=G_{0,0}(y)$.
Since $Q_{2,\theta}$ is the quantile function of $Y_{i}\mid(D_{i}=0,Z_{i}=1)$,
it is the quantile function of $G_{0,0}$. Since $\tc/(1-k_{1})=\eta/(1-k_{1})<\varepsilon_{1}$,
we have that $G_{0,0}(B_{1})\geq\varepsilon_{1}>\tc/(1-k_{1})$. Define
\[
\tG_{0,1}(y)=\frac{G_{0,0}(y)}{G_{0,0}(B_{2})}\cdot\oneb\{y\leq B_{2}\}.
\]

Again, $\tG_{0,1}$ is a legitimate cdf for a random variable with
support in $[-M,M]$.

We then choose $\tF_{0,0}$ to be the cdf of any random variable with
support in $[-M,M]$, 
\[
\tF_{1,1}(y)=\frac{k_{2}F_{1,1}(y)-\tc\tF_{0,1}(y)}{k_{2}-\tc}
\]
 and 
\[
\tF_{1,0}(y)=\frac{(k_{1}-k_{2})F_{1,0}(y)+\tc\tF_{0,1}(y)}{\tb}.
\]

Finally, we choose $\tG_{1,1}$ to be the cdf of any random variable
with support in $[-M,M]$, $\tG_{0,1}(y)=G_{0,0}(y)$, 
\[
\tG_{0,0}(y)=\frac{(1-k_{1})G_{0,0}(y)-\tc\tG_{0,1}(y)}{1-k_{1}-\tc}
\]
 and
\[
\tG_{1,0}(y)=\frac{(k_{1}-k_{2})G_{1,0}(y)+\tc\tG_{0,1}(y)}{\tb}.
\]

Clearly, $\tF_{1,0}$ and $\tG_{1,0}$ are legitimate cdf's since
each is a convex combination of cdf's (due to $k_{1}-k_{2},\tc>0$
and $\tb=k_{1}-k_{2}+\tc$). We now check $\tF_{1,1}$. By the definition
of $\tF_{0,1}$, we have
\[
\tF_{1,1}(y)=\begin{cases}
\frac{k_{2}}{k_{2}-\tc}F_{1,1}(y) & \text{if }y<B_{1}\\
\frac{1}{k_{2}-\tc}\cdot\left[\left(k_{2}-\frac{\tc}{1-F_{1,1}(B_{1})}\right)\cdot F_{1,1}(y)+\frac{\tc}{1-F_{1,1}(B_{1})}F_{1,1}(B_{1})\right] & \text{if }y\geq B_{1}.
\end{cases}
\]

We first observe $\tF_{1,1}(M)=1$. We also observe that $\tF_{1,1}(B_{1})=\frac{k_{2}}{k_{2}-\tc}F_{1,1}(B_{1})\geq\frac{k_{2}}{k_{2}-\tc}\lim_{y\uparrow B_{1}}F_{1,1}(B_{1})=\lim_{y\uparrow B_{1}}\tF_{1,1}(B_{1})$.
Moreover, $\tF_{1,1}$ is clearly non-decreasing and right-continuous
on $[-M,B_{1})$. It is also right-continuous on $[B_{1},M]$. Since
$F_{1,1}(B_{1})<1-\tc/k_{2}$, we have that $k_{2}-\frac{\tc}{1-F_{1,1}(B_{1})}>0$
and $\tF_{1,1}$ is non-decreasing on $[B_{1},M]$. Therefore, $\tF_{1,1}$
is a legitimate cdf for a distribution supported inside $[-M,M]$. 

Finally, we check $\tG_{0,0}$. We observe 
\[
\tG_{0,0}(y)=\begin{cases}
\frac{1}{1-k_{1}-\tc}\cdot\left(1-k_{1}-\frac{\tc}{G_{0,0}(B_{2})}\right)\cdot G_{0,0}(y) & \text{if }y\leq B_{2}\\
\frac{1-k_{1}}{1-k_{1}-\tc}\cdot G_{0,0}(y) & \text{if }y>B_{2}.
\end{cases}
\]

Since $G_{0,0}(B_{2})\geq\tc/(1-k_{1})$, we have that $1-k_{1}-\frac{\tc}{G_{0,0}(B_{2})}\geq0$
and $\tG_{0,0}$ is non-decreasing on $[-M,B_{2}]$. The rest of the
argument is analogous to that for $\tF_{1,1}$. This concludes that
$\tG_{0,0}$ is a legitimate cdf for a distribution supported inside
$[-M,M]$. Therefore, we have proved that $\{\tF_{d_{1},d_{0}}\}_{d_{1},d_{0}\in\{0,1\}}$
and $\{\tG_{d_{1},d_{0}}\}_{d_{1},d_{0}\in\{0,1\}}$ are cdf's for
a distribution with support in $[-M,M]$.

\textbf{Step 2:} show that $\ttheta\in\Theta(\eta)$. 

We clearly have $\ta+\tb=k_{1}$, $\ta+\tc=k_{2}$, $\tc\in[0,\eta]$,
$\ta,\tb,\tc\in[0,1]$ and $\ta+\tb+\tc\in[0,1]$.

We now show $P_{\ttheta}(|Y_{i}|\leq M\mid D_{i},Z_{i})=1$. By Lemma
\ref{lem: basic calc} (see Step 3), the distribution of $Y_{i}\mid(D_{i}=d,Z_{i}=z)$
is a mixture of $\{\tF_{d_{1},d_{0}}\}_{d_{1},d_{0}\in\{0,1\}}$ and
$\{\tG_{d_{1},d_{0}}\}_{d_{1},d_{0}\in\{0,1\}}$. By Step 1, the support
of $Y_{i}\mid(D_{i}=d,Z_{i}=z)$ is in $[-M,M]$ for any $d,z\in\{0,1\}$. 

It remains to check $[\mu_{1}(\ttheta)\tb-\mu_{2}(\ttheta)\tc]/(\tb-\tc)=\beta$.
We notice that $\mu_{1}(\theta)=\beta$ since $c=0$. Therefore, 
\begin{multline}
\beta=\mu_{1}(\theta)=E_{\theta}(Y_{i}(1)\mid D_{i}(1)=1,D_{i}(0)=0)-E_{\theta}(Y_{i}(0)\mid D_{i}(1)=1,D_{i}(0)=0)\\
=\int ydF_{1,0}(y)-\int ydG_{1,0}(y).\label{eq: main imps 5}
\end{multline}

We now observe 
\begin{multline}
\mu_{2}(\ttheta)=E_{\ttheta}(Y_{i}(1)\mid D_{i}(1)=0,D_{i}(0)=1)-E_{\ttheta}(Y_{i}(0)\mid D_{i}(1)=0,D_{i}(0)=1)\\
=\int yd\tF_{0,1}(y)-\int yd\tG_{0,1}(y).\label{eq: main imps 6}
\end{multline}

Similarly, 
\begin{align}
\mu_{1}(\ttheta) & =E_{\ttheta}(Y_{i}(1)\mid D_{i}(1)=1,D_{i}(0)=0)-E_{\ttheta}(Y_{i}(0)\mid D_{i}(1)=1,D_{i}(0)=0)\nonumber \\
 & =\int yd\tF_{1,0}(y)-\int yd\tG_{1,0}(y)\nonumber \\
 & \overset{\texti}{=}\left(\frac{k_{1}-k_{2}}{\tb}\int ydF_{1,0}(y)+\frac{\tc}{\tb}\int yd\tF_{0,1}(y)\right)-\left(\frac{k_{1}-k_{2}}{\tb}\int ydG_{1,0}(y)+\frac{\tc}{\tb}\int yd\tG_{0,1}(y)\right)\nonumber \\
 & =\frac{k_{1}-k_{2}}{\tb}\left(\int ydF_{1,0}(y)-\int ydG_{1,0}(y)\right)+\frac{\tc}{\tb}\left(\int yd\tF_{0,1}(y)-\int yd\tG_{0,1}(y)\right)\nonumber \\
 & \overset{\textii}{=}\frac{k_{1}-k_{2}}{\tb}\beta+\frac{\tc}{\tb}\mu_{2}(\ttheta),\label{eq: main imps 7}
\end{align}
where (i) follows by the definitions of $\tF_{1,0}$ and $\tG_{1,0}$
and (ii) follows by (\ref{eq: main imps 5}) and (\ref{eq: main imps 6}).

By (\ref{eq: main imps 6}) and (\ref{eq: main imps 7}) as well as
$\tb-\tc=k_{1}-k_{2}$, we have 
\[
\frac{\mu_{1}(\ttheta)\tb-\mu_{2}(\ttheta)\tc}{\tb-\tc}=\frac{\mu_{1}(\ttheta)\tb-\mu_{2}(\ttheta)\tc}{k_{1}-k_{2}}=\frac{(k_{1}-k_{2})\beta}{k_{1}-k_{2}}=\beta.
\]

It remains to show that $\mu_{1}(\ttheta)>-\beta$ and $\mu_{2}(\ttheta)>-\beta$.
First we show $\mu_{1}(\ttheta)>-\beta$. By (\ref{eq: main imps 7}),
we only need to verify $\tc\mu_{2}(\ttheta)>-(k_{1}-k_{2}+\tb)\beta$.
By (\ref{eq: main imps 6}), it suffices to verify
\[
\int yd\tF_{0,1}(y)-\int yd\tG_{0,1}(y)>-\frac{(k_{1}-k_{2}+\tb)\beta}{\tc}=-\frac{[2(k_{1}-k_{2})+\eta]\beta}{\eta}.
\]

Since $\eta<k_{1}-k_{2}$, it is enough to check
\begin{equation}
\int yd\tF_{0,1}(y)-\int yd\tG_{0,1}(y)>-\frac{3(k_{1}-k_{2})\beta}{\eta}.\label{eq: main imps 8}
\end{equation}

Notice that $\tF_{0,1}$ is the cdf of a random variable taking values
in $[B_{1},M]$. Thus, $\int yd\tF_{0,1}(y)\geq B_{1}$. Similarly,
$\tG_{0,1}$ is the cdf of a random variable taking values in $[-M,B_{2}]$,
which means that $\int yd\tG_{0,1}(y)\leq B_{2}$. It follows that
\[
\int yd\tF_{0,1}(y)-\int yd\tG_{0,1}(y)\geq B_{1}-B_{2}=Q_{1,\theta}(1-\varepsilon_{1})-\delta-Q_{2,\theta}(\varepsilon_{1})\geq\varepsilon_{2}-\delta.
\]

By $\delta<\varepsilon_{2}-3(k_{1}-k_{2})|\beta|/\eta$, (\ref{eq: main imps 8})
follows. Hence, we have proved $\mu_{1}(\ttheta)>-\beta$. 

Since $\mu_{2}(\ttheta)=\int yd\tF_{0,1}(y)-\int yd\tG_{0,1}(y)$
and $\eta\leq k_{1}-k_{2}$, (\ref{eq: main imps 8}) implies that
\[
\mu_{2}(\ttheta)>-3\beta>-\beta.
\]

\textbf{Step 3:} show that the observed data $W_{n}$ has the same
distribution under $\theta$ and $\ttheta$.

By Lemma \ref{lem: basic calc}, the distribution of $Y_{i}\mid(D_{i},Z_{i})$
under $\theta$ is given by
\[
P_{\theta}(Y_{i}\leq y\mid D_{i}=1,Z_{i}=1)=F_{1,1}(y)\frac{k_{2}}{k_{1}}+F_{1,0}(y)\frac{k_{1}-k_{2}}{k_{1}},
\]
\[
P_{\theta}(Y_{i}\leq y\mid D_{i}=1,Z_{i}=0)=F_{1,1}(y),
\]
\[
P_{\theta}(Y_{i}\leq y\mid D_{i}=0,Z_{i}=1)=G_{0,0}(y),
\]
\[
P_{\theta}(Y_{i}\leq y\mid D_{i}=0,Z_{i}=0)=G_{1,0}(y)\frac{k_{1}-k_{2}}{1-k_{2}}+G_{0,0}(y)\frac{1-k_{1}}{1-k_{2}}.
\]

Similarly, using $\ta=k_{2}-\tc$ and $\tb=k_{1}-k_{2}+\tc$, the
same calculation in Lemma \ref{lem: basic calc} implies 
\[
P_{\ttheta}(Y_{i}\leq y\mid D_{i}=1,Z_{i}=1)=\tF_{1,1}(y)\frac{\ta}{\ta+\tb}+\tF_{1,0}(y)\frac{\tb}{\ta+\tb}=\tF_{1,1}(y)\frac{k_{2}-\tc}{k_{1}}+\tF_{1,0}(y)\frac{k_{1}-k_{2}+\tc}{k_{1}},
\]
\[
P_{\ttheta}(Y_{i}\leq y\mid D_{i}=1,Z_{i}=0)=\tF_{1,1}(y)\frac{\ta}{\ta+\tc}+\tF_{0,1}(y)\frac{\tc}{\ta+\tc}=\tF_{1,1}(y)\frac{k_{2}-\tc}{k_{2}}+\tF_{0,1}(y)\frac{\tc}{k_{2}},
\]
\begin{multline*}
P_{\ttheta}(Y_{i}\leq y\mid D_{i}=0,Z_{i}=1)=\tG_{0,1}(y)\frac{\tc}{1-\ta-\tb}+\tG_{0,0}(y)\frac{1-\ta-\tb-\tc}{1-\ta-\tb}\\
=\tG_{0,1}(y)\frac{\tc}{1-k_{1}}+\tG_{0,0}(y)\frac{1-k_{1}-\tc}{1-k_{1}},
\end{multline*}
\begin{multline*}
P_{\ttheta}(Y_{i}\leq y\mid D_{i}=0,Z_{i}=0)=\tG_{1,0}(y)\frac{\tb}{1-\ta-\tc}+\tG_{0,0}(y)\frac{1-\ta-\tb-\tc}{1-\ta-\tc}\\
=\tG_{1,0}(y)\frac{k_{1}-k_{2}+\tc}{1-k_{2}}+\tG_{0,0}(y)\frac{1-k_{1}-\tc}{1-k_{2}}.
\end{multline*}

By the definitions of $\tF_{d_{1},d_{0}}$and $\tG_{d_{1},d_{0}}$,
we can easily check that the two sets of equations match in all the
four relations, thereby concluding that $P_{\theta}(Y_{i}\leq y\mid D_{i},Z_{i})$
and $P_{\ttheta}(Y_{i}\leq y\mid D_{i},Z_{i})$ are the same distribution.
Under $\theta=(a,b,c,H)$, $D_{i}\mid Z_{i}$ depends only on $(a+b,a+c)$.
Since $a+b=\ta+\tb=k_{1}$ and $a+c=\ta+\tc=k_{2}$, we have that
$D_{i}\mid Z_{i}$ has the same distribution under $P_{\theta}$ and
$P_{\ttheta}$. Therefore, the observed data has the distribution
under $P_{\theta}$ and $P_{\ttheta}$. The proof is complete. 
\end{proof}
\begin{proof}[\textbf{Proof of Corollary \ref{cor: adaptivity}}]
Throughout the proof, we assume that $n$ is large enough so we have
$0<\eta<\varepsilon_{1}\min\{k_{2},1-k_{1},k_{1}-k_{2}\}$ and $|\beta|/\eta<\varepsilon_{2}/(k_{1}-k_{2})$.
Fix an arbitrary $\theta_{0}\in\Theta_{*}$. Notice that 
\begin{align}
P_{\theta_{0}}\left(\{-1,1\}\subset CS(W_{n})\right) & =P_{\theta_{0}}\left(\{-1\in CS(W_{n})\}\bigcap\{1\in CS(W_{n})\}\right)\nonumber \\
 & =1-P_{\theta_{0}}\left(\{-1\notin CS(W_{n})\}\bigcup\{1\notin CS(W_{n})\}\right)\nonumber \\
 & \geq1-P_{\theta_{0}}(-1\notin CS(W_{n}))-P_{\theta_{0}}(1\notin CS(W_{n}))\nonumber \\
 & =P_{\theta_{0}}(-1\in CS(W_{n}))+P_{\theta_{0}}(1\in CS(W_{n}))-1.\label{eq: cor adaptivity 3}
\end{align}

By Theorem \ref{thm: key imps}, there exists $\ttheta\in\Theta(\eta)$
such that $\mu_{1}(\ttheta)>0$ and $W_{n}$ has the same distribution
under $P_{\theta_{0}}$ and $P_{\ttheta}$. This means that 
\[
P_{\theta_{0}}\left(1\in CS(W_{n})\right)=P_{\ttheta}(1\in CS(W_{n}))=P_{\ttheta}(\sign(\mu_{1}(\ttheta))\in CS(W_{n})).
\]

Moreover, since $\mu_{1}(\theta_{0})=\beta<0$, we have 
\[
P_{\theta_{0}}(-1\in CS(W_{n}))=P_{\theta_{0}}(\sign(\mu_{1}(\theta_{0}))\in CS(W_{n})).
\]

Therefore, 
\begin{align*}
P_{\theta_{0}}\left(\{-1,1\}\subset CS(W_{n})\right) & \geq P_{\theta_{0}}(\sign(\mu_{1}(\theta_{0}))\in CS(W_{n}))+P_{\ttheta}(\sign(\mu_{1}(\ttheta))\in CS(W_{n}))-1\\
 & \overset{\texti}{\geq}2\cdot\inf_{\theta\in\Theta(\eta)}P_{\theta}(\sign(\mu_{1}(\theta))\in CS(W_{n}))-1,
\end{align*}
where (i) follows by $\theta_{0},\ttheta\in\Theta(\eta)$. Since the
above bound holds for an arbitrary $\theta_{0}\in\Theta_{*}$ and
the right-hand side does not depend on $\theta_{0}$, we can take
an infimum over $\theta_{0}$, obtaining 
\[
\inf_{\theta_{0}\in\Theta_{*}}P_{\theta_{0}}\left(\{-1,1\}\subset CS(W_{n})\right)\geq2\cdot\inf_{\theta\in\Theta(\eta)}P_{\theta}(\sign(\mu_{1}(\theta))\in CS(W_{n}))-1.
\]

Now we take $\liminf$ on both sides and use $\liminf_{n\rightarrow\infty}\inf_{\theta\in\Theta(\eta)}P_{\theta}\left(\sign(\mu_{1}(\theta))\in CS(W_{n})\right)\geq1-\alpha$.
The desired result follows. 
\end{proof}

\subsection{Proof of Theorem \ref{thm: phase trans binary easy}}

We start with two auxiliary results. 
\begin{lem}
\label{lem: binary lem part 1}Let $\theta=(a,b,c,H)$ satisfy $P_{\theta}(Y_{i}\in\{0,1\})=1$.
Assume that $P_{\theta}(Y_{i}=D_{i}=1\mid Z_{i}=0)$, $P_{\theta}(Y_{i}=D_{i}=0\mid Z_{i}=1)$,
$k_{2}$ and $k_{1}-k_{2}$ are bounded below by a positive constant.
If $\eta\in[0,k_{2}]$ and $\beta$ satisfy $\beta<0$, $c=0$, $|\beta|\rightarrow0$,
$\eta\rightarrow0$ and $\beta(k_{1}-k_{2})+\eta\geq0$, then for
large enough $n$ there exists $\ttheta=(\ta,\tb,\tc,\tH)$ such that
(1) $P_{\theta}$ and $P_{\ttheta}$ imply the same distribution for
the observed data $(Y_{i},D_{i},Z_{i})$ and (2) $\mu_{1}(\ttheta)\geq0$
and $\tc\leq\eta$.
\end{lem}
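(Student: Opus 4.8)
The plan is to prove this as the binary specialization of Theorem~\ref{thm: key imps}, exploiting the fact that with $Y_i\in\{0,1\}$ every conditional law of $Y_i(1)$ or $Y_i(0)$ given a type collapses to a single Bernoulli parameter. Writing $F_{d_1,d_0}$ and $G_{d_1,d_0}$ for the laws implied by $H$ as in Lemma~\ref{lem: basic calc}, I set $f_{d_1,d_0}=\int y\,dF_{d_1,d_0}(y)=P_\theta(Y_i(1)=1\mid D_i(1)=d_1,D_i(0)=d_0)$ and likewise $g_{d_1,d_0}=\int y\,dG_{d_1,d_0}(y)$, so that $\mu_1(\theta)=f_{1,0}-g_{1,0}$ and $\mu_2(\theta)=f_{0,1}-g_{0,1}$. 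Because $c=0$, Lemma~\ref{lem: basic calc} gives $a=k_2$, $b=k_1-k_2$, and the four observable cell probabilities reduce to $p_{10}:=P_\theta(Y_i=1\mid D_i=1,Z_i=0)=f_{1,1}$, $p_{01}:=P_\theta(Y_i=1\mid D_i=0,Z_i=1)=g_{0,0}$, together with the two mixtures at $(Z_i,D_i)=(1,1)$ and $(0,0)$. The hypotheses bound $p_{10}k_2=P_\theta(Y_i=D_i=1\mid Z_i=0)$ and $(1-k_1)(1-p_{01})=P_\theta(Y_i=D_i=0\mid Z_i=1)$ away from zero.

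For the perturbed law I would take $\tc=\eta$, $\tb=k_1-k_2+\eta$, $\ta=k_2-\eta$ (so $\ta+\tb=k_1$, $\ta+\tc=k_2$), and push the defier cell to the extremes by declaring defiers to have $Y_i(1)=1$ and $Y_i(0)=0$ almost surely, i.e.\ $\tF_{0,1}(y)=\oneb\{y\geq1\}$ and $\tG_{0,1}(y)=\oneb\{y\geq0\}$, which forces $\mu_2(\ttheta)=1$. The remaining Bernoulli parameters $\tilde{f}_{1,1},\tilde{f}_{1,0},\tilde{g}_{0,0},\tilde{g}_{1,0}$ are then pinned down by matching the four observable cell probabilities of Lemma~\ref{lem: basic calc} to those of $\theta$. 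The linear system decouples cleanly (the $(1,0)$ and $(0,1)$ cells give $\tilde{f}_{1,1}$ and $\tilde{g}_{0,0}$, which then feed into the $(1,1)$ and $(0,0)$ cells), yielding $\tilde{f}_{1,1}=(p_{10}k_2-\eta)/(k_2-\eta)$, $\tilde{g}_{0,0}=p_{01}(1-k_1)/(1-k_1-\eta)$, $\tilde{f}_{1,0}=(f_{1,0}(k_1-k_2)+\eta)/(k_1-k_2+\eta)$ and $\tilde{g}_{1,0}=g_{1,0}(k_1-k_2)/(k_1-k_2+\eta)$. The non-observable parameters $\tilde{f}_{0,0},\tilde{g}_{1,1}$ are free (I would set them to any value in $[0,1]$), and I would take $\tH$ to be the product of the chosen marginals exactly as in Theorem~\ref{thm: key imps}.

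The payoff computation is then immediate: substituting $\tilde{f}_{1,0}$ and $\tilde{g}_{1,0}$ and using $f_{1,0}-g_{1,0}=\mu_1(\theta)=\beta$ (valid since $c=0$) gives
\[
\mu_1(\ttheta)=\tilde{f}_{1,0}-\tilde{g}_{1,0}=\frac{\beta(k_1-k_2)+\eta}{k_1-k_2+\eta}.
\]
The denominator is positive and the hypothesis $\beta(k_1-k_2)+\eta\geq0$ makes the numerator nonnegative, so $\mu_1(\ttheta)\geq0$, with $\tc=\eta\leq\eta$ by construction, establishing claim~(2). Observational equivalence (claim~(1)) follows exactly as in Step~3 of Theorem~\ref{thm: key imps}: the four matched cell probabilities together with $\ta+\tb=k_1$ and $\ta+\tc=k_2$ make the law of $(Y_i,D_i,Z_i)$ identical under $\theta$ and $\ttheta$.

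I expect the only genuine obstacle to be feasibility, namely checking that the four solved-for parameters actually lie in $[0,1]$ so that $\ttheta$ is a bona fide binary data-generating process. This is where the boundedness hypotheses and the ``for large enough $n$'' clause enter: $\tilde{f}_{1,1}\geq0$ needs $\eta\leq p_{10}k_2=P_\theta(Y_i=D_i=1\mid Z_i=0)$; $\tilde{g}_{0,0}\leq1$ needs $\eta\leq(1-k_1)(1-p_{01})=P_\theta(Y_i=D_i=0\mid Z_i=1)$; and $\ta\geq0$ together with $\ta+\tb+\tc=k_1+\eta\leq1$ need $\eta\leq k_2$ and $\eta\leq1-k_1$ (the latter being forced by $P_\theta(Y_i=D_i=0\mid Z_i=1)\geq\varepsilon$). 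The upper bounds $\tilde{f}_{1,0},\tilde{g}_{1,0},\tilde{f}_{1,1}\leq1$ and the remaining nonnegativities are automatic from $f,g,p\in[0,1]$. Since $\eta\to0$ while each of these thresholds is bounded below by a fixed positive constant, all constraints hold once $n$ is large enough, which is precisely the statement's qualifier.
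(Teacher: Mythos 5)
Your proposal is correct and follows essentially the same route as the paper: the paper also matches the four conditional cell means, leaves the defier parameters $\tr_{0,1},\ttt_{0,1}$ free subject to feasibility intervals, and then picks the extreme values $\min\{1,\rho_{1,0}k_{2}/\tc\}$ and $\max\{0,1+(t_{0,0}-1)(1-k_{1})/\tc\}$, which for large $n$ collapse to exactly your choice $\tilde{f}_{0,1}=1$, $\tilde{g}_{0,1}=0$. Your feasibility checks and the final identity $\mu_{1}(\ttheta)=\bigl(\beta(k_{1}-k_{2})+\eta\bigr)/(k_{1}-k_{2}+\eta)$ coincide with the paper's.
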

\begin{proof}
We follow a similar argument as in the proof of Theorem \ref{thm: key imps}.
Since $Y_{i}\in\{0,1\}$, we can simplify $H$ to 8 numbers as follows.
We parametrize $Y_{i}(d)\mid(D_{i}(1),D_{i}(0))$ as $E(Y_{i}(1)\mid D_{i}(1)=d_{1},D_{i}(0)=d_{0})=r_{d_{1},d_{0}}$
and $E(Y_{i}(0)\mid D_{i}(1)=d_{1},D_{i}(0)=d_{0})=t_{d_{1},d_{0}}$,
where $d_{1},d_{0}\in\{0,1\}$. 

Fix $\theta=(a,b,c,r,t)$ satisfying the following: 
\begin{itemize}
\item $c=0$. 
\item $\beta=r_{1,0}-t_{1,0}<0$.
\item $k_{1}=P_{\theta}(D_{i}=1\mid Z_{i}=1)=a+b=k_{2}+b$
\item $k_{2}=P_{\theta}(D_{i}=1\mid Z_{i}=0)=a+c=a$.
\end{itemize}
Similar to Lemma \ref{lem: basic calc}, we observe that
\[
\rho_{1,1}:=E_{\theta}(Y_{i}\mid D_{i}=1,Z_{i}=1)=r_{1,1}\frac{k_{2}}{k_{1}}+r_{1,0}\frac{k_{1}-k_{2}}{k_{1}},
\]
\[
\rho_{1,0}:=E_{\theta}(Y_{i}\mid D_{i}=1,Z_{i}=0)=r_{1,1},
\]
\[
\rho_{0,1}:=E_{\theta}(Y_{i}\mid D_{i}=0,Z_{i}=1)=t_{0,0},
\]
\[
\rho_{0,0}:=E_{\theta}(Y_{i}\mid D_{i}=0,Z_{i}=0)=t_{1,0}\frac{k_{1}-k_{2}}{1-k_{2}}+t_{0,0}\frac{1-k_{1}}{1-k_{2}}.
\]

Since $\beta=r_{1,0}-t_{1,0}$, we can write $\beta=(\rho_{1,1}k_{1}-\rho_{1,0}k_{2}-\rho_{0,0}(1-k_{2})+\rho_{0,1}(1-k_{1}))/(k_{1}-k_{2})$.
This means that we can eliminate $\rho_{0,0}$ from future calculations
by observing
\[
\rho_{0,0}=\frac{\rho_{1,1}k_{1}-\rho_{1,0}k_{2}+\rho_{0,1}(1-k_{1})-\beta(k_{1}-k_{2})}{1-k_{2}}.
\]

We observe that 
\begin{multline}
\rho_{1,0}k_{2}=P_{\theta}(Y_{i}=1\mid D_{i}=1,Z_{i}=0)\cdot P_{\theta}(D_{i}=1\mid Z_{i}=0)\\
=P_{\theta}(Y_{i}=1,D_{i}=1\mid Z_{i}=0)\label{eq: binary lem part 1 eq 2}
\end{multline}
and similarly
\begin{multline}
(1-t_{0,0})(1-k_{1})=P_{\theta}(Y_{i}=0\mid D_{i}=0,Z_{i}=1)\cdot P_{\theta}(D_{i}=0\mid Z_{i}=1)\\
=P_{\theta}(Y_{i}=0,D_{i}=0\mid Z_{i}=1).\label{eq: binary lem part 1 eq 2.5}
\end{multline}

We now construct $\ttheta=(\ta,\tb,\tc,\tr,\ttt)$. We set $\tb=k_{1}-k_{2}+\tc$
and $\ta=k_{2}-\tc$, where $\tc=\min\{k_{2},\eta\}$. Similar to
Lemma \ref{lem: basic calc}, we observe

\[
E_{\ttheta}(Y_{i}\mid D_{i}=1,Z_{i}=1)=\tr_{1,1}\frac{\ta}{\ta+\tb}+\tr_{1,0}\frac{\tb}{\ta+\tb}=\tr_{1,1}\frac{k_{2}-\tc}{k_{1}}+\tr_{1,0}\frac{k_{1}-k_{2}+\tc}{k_{1}},
\]
\[
E_{\ttheta}(Y_{i}\mid D_{i}=1,Z_{i}=0)=\tr_{1,1}\frac{\ta}{\ta+\tc}+\tr_{0,1}\frac{\tc}{\ta+\tc}=\tr_{1,1}\frac{k_{2}-\tc}{k_{2}}+\tr_{0,1}\frac{\tc}{k_{2}},
\]
\[
E_{\ttheta}(Y_{i}\mid D_{i}=0,Z_{i}=1)=\ttt_{0,1}\frac{\tc}{1-\ta-\tb}+\ttt_{0,0}\frac{1-\ta-\tb-\tc}{1-\ta-\tb}=\ttt_{0,1}\frac{\tc}{1-k_{1}}+\ttt_{0,0}\frac{1-k_{1}-\tc}{1-k_{1}},
\]
\[
E_{\ttheta}(Y_{i}\mid D_{i}=0,Z_{i}=0)=\ttt_{1,0}\frac{\tb}{1-\ta-\tc}+\ttt_{0,0}\frac{1-\ta-\tb-\tc}{1-\ta-\tc}=\ttt_{1,0}\frac{k_{1}-k_{2}+\tc}{1-k_{2}}+\ttt_{0,0}\frac{1-k_{1}-\tc}{1-k_{2}}.
\]

As in the proof of Theorem \ref{thm: key imps}, the distribution
of $(D_{i},Z_{i})$ is the same under $P_{\theta}$ and under $P_{\ttheta}$.
The distribution of $Y_{i}\mid(D_{i},Z_{i})$ is also identical under
$P_{\theta}$ and $P_{\ttheta}$ if the conditional mean $E(Y_{i}\mid D_{i},Z_{i})$
matches in all the four cases of $(D_{i},Z_{i})\in\{0,1\}\times\{0,1\}$.
This means that
\[
r_{1,1}k_{2}+r_{1,0}(k_{1}-k_{2})=\tr_{1,1}(k_{2}-\tc)+\tr_{1,0}(k_{1}-k_{2}+\tc)
\]
\[
r_{1,1}k_{2}=\tr_{1,1}(k_{2}-\tc)+\tr_{0,1}\tc
\]
\[
t_{0,0}(1-k_{2})=\ttt_{0,1}\tc+\ttt_{0,0}(1-k_{2}-\tc)
\]
\[
t_{1,0}(k_{1}-k_{2})+t_{0,0}(1-k_{1})=\ttt_{1,0}(k_{1}-k_{2}+\tc)+\ttt_{0,0}(1-k_{1}-\tc).
\]

Once we impose the constraints of $\tr_{d_{1},d_{0}},\ttt_{d_{1},d_{0}}\in[0,1]$,
we have that
\[
\tr_{1,1}=\frac{r_{1,1}k_{2}-\tr_{0,1}\tc}{k_{2}-\tc}
\]
\[
\tr_{1,0}=\frac{r_{1,0}(k_{1}-k_{2})+\tr_{0,1}\tc}{k_{1}-k_{2}+\tc}
\]
\[
\ttt_{0,0}=\frac{t_{0,0}(1-k_{2})-\ttt_{0,1}\tc}{1-k_{2}-\tc}
\]
\[
\ttt_{1,0}=\frac{t_{1,0}(k_{1}-k_{2})+\ttt_{0,1}\tc}{k_{1}-k_{2}+\tc},
\]
as well as 
\begin{equation}
\max\left\{ 0,\ 1+\frac{(\rho_{1,0}-1)k_{2}}{\tc}\right\} \leq\tr_{0,1}\leq\min\left\{ 1,\ \frac{\rho_{1,0}k_{2}}{\tc}\right\} \label{eq: binary lem part 1 eq 4}
\end{equation}
and 
\begin{equation}
\max\left\{ 0,\ 1+\frac{(t_{0,0}-1)(1-k_{2})}{\tc}\right\} \leq\ttt_{0,1}\leq\min\left\{ 1,\ \frac{t_{0,0}(1-k_{2})}{\tc}\right\} .\label{eq: binary lem part 1 eq 5}
\end{equation}

Since $\tc=o(1)$, we have $k_{2}+\tc\leq1$. By this and $\tc\leq k_{2}$,
the above inequalities can hold. We set $\tr_{0,1}=\min\{1,\rho_{1,0}k_{2}/\tc\}$
and $\ttt_{0,1}=\max\{0,1+(t_{0,0}-1)(1-k_{1})/\tc\}$. Then 
\begin{align*}
\mu_{1}(\ttheta)=\tr_{1,0}-\ttt_{1,0} & =\frac{r_{1,0}(k_{1}-k_{2})+\tr_{0,1}\tc}{k_{1}-k_{2}+\tc}-\frac{t_{1,0}(k_{1}-k_{2})+\ttt_{0,1}\tc}{k_{1}-k_{2}+\tc}\\
 & =\frac{(r_{1,0}-t_{1,0})(k_{1}-k_{2})+(\tr_{0,1}-\ttt_{0,1})\tc}{k_{1}-k_{2}+\tc}=\frac{\beta(k_{1}-k_{2})+(\tr_{0,1}-\ttt_{0,1})\tc}{k_{1}-k_{2}+\tc}.
\end{align*}

It only remains to show that 
\[
\beta(k_{1}-k_{2})+(\tr_{0,1}-\ttt_{0,1})\tc\geq0.
\]

By the definitions of $\tr_{0,1}$ and $\ttt_{0,1}$, we need to show
that 
\begin{equation}
\beta(k_{1}-k_{2})+\min\{\tc,\rho_{1,0}k_{2}\}-\max\{0,\tc+(t_{0,0}-1)(1-k_{1})\}\geq0.\label{eq: binary lem part 1 eq 8}
\end{equation}

By (\ref{eq: binary lem part 1 eq 2}) and (\ref{eq: binary lem part 1 eq 2.5}),
the assumptions imply that $\rho_{1,0}k_{2}$ and $(1-t_{0,0})(1-k_{1})$
are bounded below by a positive constant. For large $n$, $\tc=\min\{k_{2},\eta\}=\eta\rightarrow0$.
Then for large enough $n$, $\min\{\tc,\rho_{1,0}k_{2}\}=\eta$ and
$\max\{0,\tc+(t_{0,0}-1)(1-k_{1})\}=0$. Thus, (\ref{eq: binary lem part 1 eq 8})
becomes $\beta(k_{1}-k_{2})+\eta\geq0$, which is assumed to be true.
The proof is complete. 
\end{proof}
\begin{lem}
\label{lem: binary lem part 2}Let $\theta=(a,b,c,H)$ satisfy $P_{\theta}(Y_{i}\in\{0,1\})=1$.
Assume that $P_{\theta}(Y_{i}=D_{i}=1\mid Z_{i}=0)$, $P_{\theta}(Y_{i}=D_{i}=0\mid Z_{i}=1)$,
$k_{2}$ and $k_{1}-k_{2}$ are bounded below by a positive constant.
If $\eta\in[0,k_{2}]$ and $\beta$ satisfy $\beta<0$, $c\leq\eta$,
$|\beta|\rightarrow0$, $\eta\rightarrow0$ and $\beta(k_{1}-k_{2})+\eta<0$,
then $\mu_{1}(\theta)<0$.
\end{lem}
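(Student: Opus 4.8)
The plan is to work directly from the IV identity (\ref{eq: IV beta}) and to exploit the fact that binary outcomes force $\mu_{2}(\theta)\le 1$. Solving (\ref{eq: IV beta}) for the complier LATE gives
\[
\mu_{1}(\theta)=\frac{\beta(b-c)+\mu_{2}(\theta)c}{b}.
\]
Using $a+b=k_{1}$ and $a+c=k_{2}$ I substitute $b-c=k_{1}-k_{2}$ and $b=k_{1}-k_{2}+c$, obtaining
\[
\mu_{1}(\theta)=\frac{\beta(k_{1}-k_{2})+\mu_{2}(\theta)c}{k_{1}-k_{2}+c}.
\]
Since $k_{1}-k_{2}$ is bounded below by a positive constant and $c\ge 0$, the denominator is strictly positive; in particular $b>0$, so $\mu_{1}(\theta)$ is well defined and its sign is the sign of the numerator.

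First I would dispose of the degenerate case $c=0$: then (\ref{eq: IV beta}) collapses to $\beta=\mu_{1}(\theta)$, so $\mu_{1}(\theta)=\beta<0$ immediately (the defier term $\mu_{2}(\theta)c$ vanishes regardless of whether $\mu_{2}(\theta)$ is defined). For $c>0$ the defier LATE is a genuine conditional expectation, and because $Y_{i}\in\{0,1\}$ the individual treatment effect $Y_{i}(1)-Y_{i}(0)$ is at most $1$, whence $\mu_{2}(\theta)\le 1$. Combining this with $c\ge 0$ and $c\le\eta$ bounds the numerator:
\[
\beta(k_{1}-k_{2})+\mu_{2}(\theta)c\le\beta(k_{1}-k_{2})+c\le\beta(k_{1}-k_{2})+\eta<0,
\]
where the final inequality is exactly the hypothesis $\beta(k_{1}-k_{2})+\eta<0$. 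As the denominator $k_{1}-k_{2}+c$ is positive, this yields $\mu_{1}(\theta)<0$.

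The argument is essentially a one-line sign computation, so there is no substantive obstacle; the only points demanding care are (i) verifying $b>0$ so that $\mu_{1}(\theta)$ is defined (guaranteed by $k_{1}-k_{2}$ being bounded away from zero), and (ii) justifying $\mu_{2}(\theta)\le 1$ from the binary structure of $Y_{i}$ while handling $c=0$ separately, so that no division by a vanishing $c$ is invoked. It is worth noting that the hypotheses on $P_{\theta}(Y_{i}=D_{i}=1\mid Z_{i}=0)$ and $P_{\theta}(Y_{i}=D_{i}=0\mid Z_{i}=1)$, as well as the asymptotics $|\beta|\to 0$ and $\eta\to 0$, play no role in this direction; the conclusion holds pointwise the moment $\beta(k_{1}-k_{2})+\eta<0$.
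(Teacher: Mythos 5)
Your proposal is correct and follows essentially the same route as the paper: both rearrange $\beta=(\mu_{1}b-\mu_{2}c)/(b-c)$ to reduce the claim to $c\mu_{2}(\theta)<-\beta(k_{1}-k_{2})$, and both close the argument via $c\mu_{2}(\theta)\leq c\leq\eta$ (using $\mu_{2}(\theta)\leq1$ from the binary outcome) together with the hypothesis $\beta(k_{1}-k_{2})+\eta<0$. Your separate treatment of $c=0$ and your observation that the lower bounds on $P_{\theta}(Y_{i}=D_{i}=1\mid Z_{i}=0)$, $P_{\theta}(Y_{i}=D_{i}=0\mid Z_{i}=1)$ and the asymptotics are not needed for this direction are accurate but do not change the substance.
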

\begin{proof}
Recall that $\beta=\frac{\mu_{1}(\theta)\cdot b-\mu_{2}(\theta)\cdot c}{b-c}$.
Therefore,
\[
\mu_{1}(\theta)=\frac{c}{b}\mu_{2}(\theta)+\left(1-\frac{c}{b}\right)\beta.
\]

Since $\beta<0$, it suffices to show that $c\mu_{2}(\theta)+(b-c)\beta<0$.
Since $b-c=k_{1}-k_{2}$, we need to show that $c\mu_{2}(\theta)<-\beta(k_{1}-k_{2})$.
Since $c\leq\eta$ and $\mu_{2}(\theta)\in\{0,1\}$, we have $c\mu_{2}\leq\eta$.
Hence, $c\mu_{2}(\theta)<-\beta(k_{1}-k_{2})$ by the assumption of
$\beta(k_{1}-k_{2})+\eta<0$. The proof is complete. 
\end{proof}
\begin{proof}[\textbf{Proof of Theorem \ref{thm: phase trans binary easy}}]
Part (2) follows by Lemma \ref{lem: binary lem part 2}. 

For part (1), we apply Lemma \ref{lem: binary lem part 1} and follow
the same argument as the proof of Corollary \ref{cor: adaptivity}.
Therefore, when $\beta(k_{1}-k_{2})+\eta\geq0$, if a confidence set
$CS(W_{n})$ satisfies
\[
\liminf_{n\rightarrow\infty}\inf_{\theta\in\Theta_{binary}(\eta)}P_{\theta}(\mu_{1}(\theta)\in CS(W_{n}))\geq1-\alpha
\]
for $\alpha\in(0,1)$, then 
\[
\liminf_{n\rightarrow\infty}\inf_{\theta\in\Theta_{binary}(0)}P_{\theta}(\{-1,0\}\subset CS(W_{n}))\geq1-\alpha.
\]

Suppose that a consistent estimator exists, i.e., $\rho(W_{n})\in\{-1,0,1\}$
and $\liminf_{n\rightarrow\infty}\inf_{\theta\in\Theta_{binary}(\eta)}P_{\theta}(\mu_{1}(\theta)\in\{\rho(W_{n})\})=1-o(1)$.
Since $\Theta_{binary}(0)\subset\Theta_{binary}(\eta)$, it follows
that 
\[
\liminf_{n\rightarrow\infty}\inf_{\theta\in\Theta_{binary}(0)}P_{\theta}(\mu_{1}(\theta)\in\{\rho(W_{n})\})=1-o(1).
\]

However, this contradicts $\liminf_{n\rightarrow\infty}\inf_{\theta\in\Theta_{binary}(0)}P_{\theta}(\{-1,0\}\subset\{\rho(W_{n})\})\geq1-\alpha$
since $\{\rho(W_{n})\}$ is a singleton. The proof is complete. 
\end{proof}

\subsection{Proof of Theorem \ref{thm: phase trans binary hard}}
\begin{lem}
\label{lem: binary lem experiment part 1}Let $\theta=(a,b,c,H)$
satisfy $P_{\theta}(Y_{i}\in\{0,1\})=1$. Assume that $P_{\theta}(Y_{i}=D_{i}=0\mid Z_{i}=1)$
and $k_{1}-k_{2}$ are bounded below by a positive constant. If $c=0$,
$k_{2}=o(1)$ and $\beta=o(1)$ satisfy $\beta<0$ and $\beta(k_{1}-k_{2})+P_{\theta}(Y_{i}=1,D_{i}=1\mid Z_{i}=0)\geq0$,
then for large enough $n$ there exists $\ttheta=(\ta,\tb,\tc,\tH)$
such that (1) $P_{\theta}$ and $P_{\ttheta}$ imply the same distribution
for the observed data $(Y_{i},D_{i},Z_{i})$ and (2) $\mu_{1}(\ttheta)\geq0$
and and $\tc\leq k_{2}$.
\end{lem}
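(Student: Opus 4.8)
The plan is to reuse the construction in the proof of Lemma \ref{lem: binary lem part 1} almost verbatim, but in the regime $k_2=o(1)$ with the extremal choice $\tc=k_2$. As there, parametrize the potential-outcome means by $r_{d_1,d_0}=E_\theta(Y_i(1)\mid D_i(1)=d_1,D_i(0)=d_0)$ and $t_{d_1,d_0}=E_\theta(Y_i(0)\mid D_i(1)=d_1,D_i(0)=d_0)$, and write $\rho_{d,z}=E_\theta(Y_i\mid D_i=d,Z_i=z)$. Under $\theta$ with $c=0$, $a=k_2$, $b=k_1-k_2$, Lemma \ref{lem: basic calc} gives $\rho_{1,0}=r_{1,1}$, $\rho_{0,1}=t_{0,0}$, $\rho_{1,1}=r_{1,1}k_2/k_1+r_{1,0}(k_1-k_2)/k_1$ and $\beta=r_{1,0}-t_{1,0}$. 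I set $\tc=k_2$, $\tb=k_1$, $\ta=k_2-\tc=0$; then $\ta,\tb,\tc\ge0$ and the never-taker mass $1-\ta-\tb-\tc=1-k_1-k_2>0$ for large $n$, so $\ttheta$ is a valid parameter. The crucial structural feature is that $\ta=0$ empties the always-taker stratum, so the quantities $\tr_{1,1}$ and $\ttt_{1,1}$ are immaterial and may be fixed arbitrarily.

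First I would impose that the four conditional means $E_{\ttheta}(Y_i\mid D_i,Z_i)$ agree with those under $\theta$; together with $\ta+\tb=k_1$ and $\ta+\tc=k_2$ this is exactly what makes the observed-data laws coincide. Because $\ta=0$, the two $D_i=1$ equations collapse to $\tr_{1,0}=\rho_{1,1}$ and $\tr_{0,1}=\rho_{1,0}$, so $\tr_{0,1}$ is forced (not free) but automatically lies in $[0,1]$. The two $D_i=0$ equations retain one degree of freedom, parametrized by $\ttt_{0,1}$, and yield $\ttt_{1,0}=[t_{1,0}(k_1-k_2)+\ttt_{0,1}\tc]/k_1$ and $\ttt_{0,0}=[t_{0,0}(1-k_1)-\ttt_{0,1}\tc]/(1-k_1-\tc)$. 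I then take $\ttt_{0,1}=0$.

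The feasibility of $\ttt_{0,1}=0$ is the only nonroutine point and is exactly where the hypotheses are consumed. With $\tc=k_2$ one checks directly that $\tr_{1,0}=\rho_{1,1}\in[0,1]$, $\tr_{0,1}=\rho_{1,0}\in[0,1]$ and $\ttt_{1,0}=t_{1,0}(k_1-k_2)/k_1\in[0,1]$. The remaining requirements, namely $\ttt_{0,1}=0\ge\max\{0,\,1+(t_{0,0}-1)(1-k_1)/\tc\}$ and $\ttt_{0,0}\le1$, both reduce to the single inequality $(1-t_{0,0})(1-k_1)\ge k_2$. By (\ref{eq: binary lem part 1 eq 2.5}), $(1-t_{0,0})(1-k_1)=P_\theta(Y_i=0,D_i=0\mid Z_i=1)$ is bounded below by a positive constant, whereas $k_2=o(1)$; hence the inequality holds for all large $n$. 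This is the main obstacle, in the sense that it is the step forcing the ``large enough $n$'' qualifier and the only place the assumption on $P_\theta(Y_i=D_i=0\mid Z_i=1)$ is needed.

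It remains to read off the sign of the complier effect. Since $\tb-\tc=k_1-k_2$, the same algebra as in the proof of Lemma \ref{lem: binary lem part 1} gives
\[
\mu_1(\ttheta)=\tr_{1,0}-\ttt_{1,0}=\frac{\beta(k_1-k_2)+(\tr_{0,1}-\ttt_{0,1})\tc}{k_1-k_2+\tc}=\frac{\beta(k_1-k_2)+\rho_{1,0}k_2}{k_1}.
\]
By (\ref{eq: binary lem part 1 eq 2}), $\rho_{1,0}k_2=P_\theta(Y_i=1,D_i=1\mid Z_i=0)$, so the numerator equals $\beta(k_1-k_2)+P_\theta(Y_i=1,D_i=1\mid Z_i=0)\ge0$ by hypothesis; thus $\mu_1(\ttheta)\ge0$. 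Together with $\tc=k_2\le k_2$ and the matching of the observed-data distributions, this establishes both claims. The one conceptual difference from Lemma \ref{lem: binary lem part 1} is that the extremal choice $\tc=k_2$ (rather than $\tc=\eta$) extracts the maximal admissible defier contribution $\rho_{1,0}k_2$, which is precisely why the boundary here is $P_\theta(Y_i=1,D_i=1\mid Z_i=0)$ and not $k_2$.
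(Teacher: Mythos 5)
Your proof is correct and follows essentially the same route as the paper: both reuse the construction and matching equations from the proof of Lemma \ref{lem: binary lem part 1}, and both hinge on making the defier contribution $\tr_{0,1}\tc$ equal its maximal feasible value $\rho_{1,0}k_{2}=P_{\theta}(Y_{i}=1,D_{i}=1\mid Z_{i}=0)$, which is exactly why that quantity is the boundary. The only difference is the extremal point chosen: the paper keeps $\tr_{0,1}=\min\{1,\rho_{1,0}k_{2}/\tc\}=1$ and sets $\tc=\rho_{1,0}k_{2}$, whereas you set $\tc=k_{2}$ (so $\ta=0$, forcing $\tr_{0,1}=\rho_{1,0}$); both choices saturate the same binding constraint and yield $\mu_{1}(\ttheta)=[\beta(k_{1}-k_{2})+\rho_{1,0}k_{2}]/k_{1}\geq0$, and your feasibility check for $\ttt_{0,1}=0$ uses the hypothesis on $P_{\theta}(Y_{i}=D_{i}=0\mid Z_{i}=1)$ in the same way the paper does.
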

\begin{proof}
We repeat all the arguments in the proof of Lemma \ref{lem: binary lem part 1}
including Equation (\ref{eq: binary lem part 1 eq 8}), which we repeat
here:
\[
\beta(k_{1}-k_{2})+\min\{\tc,\rho_{1,0}k_{2}\}-\max\{0,\tc+(t_{0,0}-1)(1-k_{1})\}\geq0.
\]

We inherit all the notations from the proof of Lemma \ref{lem: binary lem part 1}.
The only difference is that $\tc=\rho_{1,0}k_{2}$ (rather than $\tc=\min\{\eta,k_{2}\}$).
As argued in (\ref{eq: binary lem part 1 eq 2.5}) in the proof of
Lemma \ref{lem: binary lem part 1}, $(1-t_{0,0})(1-k_{1})=P_{\theta}(Y_{i}=D_{i}=0\mid Z_{i}=1)$,
which by assumption is bounded below by a positive constant. Since
$\tc\leq k_{2}\rightarrow0$, the above display for large enough $n$
becomes
\begin{equation}
\beta(k_{1}-k_{2})+\min\{\tc,\rho_{1,0}k_{2}\}\geq0.\label{eq: binary experiment lem part 1 eq 3}
\end{equation}

By $\tc=\rho_{1,0}k_{2}$, this becomes $\beta(k_{1}-k_{2})+\rho_{1,0}k_{2}\geq0$.
As argued in (\ref{eq: binary lem part 1 eq 2}) in the proof of Lemma
\ref{lem: binary lem part 1}, $\rho_{1,0}k_{2}=P_{\theta}(Y_{i}=1,D_{i}=1\mid Z_{i}=0)$.
Thus, $\beta(k_{1}-k_{2})+\rho_{1,0}k_{2}\geq0$ follows by assumption.
The proof is complete. 
\end{proof}
\begin{lem}
\label{lem: binary lem experiment part 2}Let $\theta=(a,b,c,H)$
satisfy $P_{\theta}(Y_{i}\in\{0,1\})=1$. Assume that $P_{\theta}(Y_{i}=D_{i}=0\mid Z_{i}=1)$
and $k_{1}-k_{2}$ are bounded below by a positive constant. If $c\leq k_{2}=o(1)$
and $\beta=o(1)$ satisfy $\beta<0$ and $\beta(k_{1}-k_{2})+P_{\theta}(Y_{i}=1,D_{i}=1\mid Z_{i}=0)<0$,
then $\mu_{1}(\theta)<0$. 
\end{lem}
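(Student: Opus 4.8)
The plan is to mirror the purely algebraic argument used for Lemma \ref{lem: binary lem part 2}, but to replace the crude bound $c\mu_{2}(\theta)\leq\eta$ with the sharper bound furnished by the observable quantity $P_{\theta}(Y_{i}=1,D_{i}=1\mid Z_{i}=0)$. I would begin from the identity $\beta=(\mu_{1}(\theta)b-\mu_{2}(\theta)c)/(b-c)$, which rearranges to
\[
\mu_{1}(\theta)=\frac{c\mu_{2}(\theta)+(b-c)\beta}{b}.
\]
Since $b-c=k_{1}-k_{2}>0$, we have $b=k_{1}-k_{2}+c>0$, so $\sign(\mu_{1}(\theta))$ equals the sign of the numerator. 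It therefore suffices to prove $c\mu_{2}(\theta)+(b-c)\beta<0$, i.e. $c\mu_{2}(\theta)<-\beta(k_{1}-k_{2})$, where the right-hand side is positive because $\beta<0$.

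The only new ingredient is an upper bound on $c\mu_{2}(\theta)$. Using the binary-outcome parametrization from the proof of Lemma \ref{lem: binary lem part 1}, write $\mu_{2}(\theta)=r_{0,1}-t_{0,1}$, where $r_{0,1}=P_{\theta}(Y_{i}(1)=1\mid\text{defier})$ and $t_{0,1}=P_{\theta}(Y_{i}(0)=1\mid\text{defier})$. Because $t_{0,1}\geq0$, this gives $c\mu_{2}(\theta)\leq c\,r_{0,1}$.

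Next I would identify $c\,r_{0,1}$ as a lower piece of the observable. Under $Z_{i}=0$ we have $D_{i}=D_{i}(0)$ and $Y_{i}=Y_{i}(1)$ on $\{D_{i}=1\}$, so by Assumption \ref{assu: IV exogeneity},
\[
P_{\theta}(Y_{i}=1,D_{i}=1\mid Z_{i}=0)=P_{\theta}(Y_{i}(1)=1,D_{i}(0)=1)=a\,r_{1,1}+c\,r_{0,1},
\]
where the two terms split $\{D_{i}(0)=1\}$ into always-takers and defiers. Since $a\,r_{1,1}\geq0$, this yields $c\,r_{0,1}\leq P_{\theta}(Y_{i}=1,D_{i}=1\mid Z_{i}=0)$. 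Chaining the inequalities,
\[
c\mu_{2}(\theta)\leq c\,r_{0,1}\leq P_{\theta}(Y_{i}=1,D_{i}=1\mid Z_{i}=0)<-\beta(k_{1}-k_{2}),
\]
where the last, strict, inequality is exactly the hypothesis $\beta(k_{1}-k_{2})+P_{\theta}(Y_{i}=1,D_{i}=1\mid Z_{i}=0)<0$. This gives $c\mu_{2}(\theta)<-\beta(k_{1}-k_{2})$ and hence $\mu_{1}(\theta)<0$, completing the argument.

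I do not expect a genuine obstacle here; the content is the one-sided bookkeeping in the displayed identity for $P_{\theta}(Y_{i}=1,D_{i}=1\mid Z_{i}=0)$ and the observation that discarding the nonnegative terms $c\,t_{0,1}$ and $a\,r_{1,1}$ sharpens the naive bound $c\mu_{2}(\theta)\leq c\leq k_{2}$ into one controlled by the testable boundary. The one point to watch is that, just as in Lemma \ref{lem: binary lem part 2}, the asymptotic hypotheses $k_{2}=o(1)$ and $\beta=o(1)$ are not actually needed for this easy direction, which is a finite, algebraic statement; they are inherited from the companion existence result, Lemma \ref{lem: binary lem experiment part 1}.
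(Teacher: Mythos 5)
Your proposal is correct and follows essentially the same route as the paper: reduce to showing $c\mu_{2}(\theta)<-\beta(k_{1}-k_{2})$, then bound $c\mu_{2}(\theta)\leq c\,r_{0,1}\leq r_{1,1}(k_{2}-c)+r_{0,1}c=P_{\theta}(Y_{i}=1,D_{i}=1\mid Z_{i}=0)$ and invoke the hypothesis. Your closing remark that the asymptotic conditions are not used in this direction is also accurate of the paper's own argument.
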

\begin{proof}
Recall that $\beta=\frac{\mu_{1}(\theta)\cdot b-\mu_{2}(\theta)\cdot c}{b-c}$,
which means that $\mu_{1}(\theta)=\frac{c}{b}\mu_{2}(\theta)+\left(1-\frac{c}{b}\right)\beta$.
Since $\beta<0$, it suffices to show that $c\mu_{2}(\theta)+(b-c)\beta<0$.
Since $b-c=k_{1}-k_{2}$, we need to show that 
\begin{equation}
c\mu_{2}(\theta)<-\beta(k_{1}-k_{2}).\label{eq: binary experiemnt lem part 2 eq 2}
\end{equation}

Following a similar computation as in the proof of Lemma \ref{lem: binary lem part 1},
we have 
\[
E_{\theta}(Y_{i}\mid D_{i}=1,Z_{i}=0)=r_{1,1}\frac{k_{2}-c}{k_{2}}+r_{0,1}\frac{c}{k_{2}},
\]
where $r_{1,1}=E_{\theta}(Y_{i}(1)\mid D_{i}(1)=1,D_{i}(0)=1)$ and
$r_{0,1}=E_{\theta}(Y_{i}(1)\mid D_{i}(1)=0,D_{i}(0)=1)$. Since $P_{\theta}(D_{i}=1\mid Z_{i}=0)=k_{2}$,
we have
\begin{multline*}
P_{\theta}(Y_{i}=1,D_{i}=1\mid Z_{i}=0)=P_{\theta}(Y_{i}=1\mid D_{i}=1,Z_{i}=0)\cdot P_{\theta}(D_{i}=1\mid Z_{i}=0)\\
=r_{1,1}(k_{2}-c)+r_{0,1}c\geq r_{0,1}c\overset{\texti}{\geq}\mu_{2}(\theta)c,
\end{multline*}
where (i) follows by the fact that $\mu_{2}(\theta)=r_{0,1}-E_{\theta}(Y_{i}(0)\mid D_{i}(1)=0,D_{i}(0)=1)\leq r_{0,1}$.
By the assumption of $\beta(k_{1}-k_{2})+P_{\theta}(Y_{i}=1,D_{i}=1\mid Z_{i}=0)<0$,
we have 
\[
\beta(k_{1}-k_{2})+c\mu_{2}(\theta)\leq\beta(k_{1}-k_{2})+P_{\theta}(Y_{i}=1,D_{i}=1\mid Z_{i}=0)<0.
\]

This proves (\ref{eq: binary experiemnt lem part 2 eq 2}). The proof
is complete.
\end{proof}
\begin{proof}[\textbf{Proof of Theorem \ref{thm: phase trans binary hard}}]
The proof follows the same argument as the proof of Theorem \ref{thm: phase trans binary easy},
except that Lemmas \ref{lem: binary lem part 1} and \ref{lem: binary lem part 2}
is replaced by Lemmas \ref{lem: binary lem experiment part 1} and
\ref{lem: binary lem experiment part 2}.
\end{proof}

\subsection{Proof of Theorems \ref{thm: key} and \ref{thm: new ID sign}}
\begin{proof}[\textbf{Proof of Theorem \ref{thm: key}}]
Without loss of generality, we assume $cov(D_{i},Z_{i})>0$. (This
is because we can swap the 0-1 labels for $Z_{i}$ to obtain $cov(D_{i},Z_{i})>0$
and such a swap does not change $\max\{|\mu_{1}|,|\mu_{2}|\}$.)

The case for $cov(D_{i},Z_{i})<0$ follows by swap $Z_{i}$ and $1-Z_{i}$.
We recall the notation of $k_{1}=E(D_{i}\mid Z_{i}=1)=E(D_{i}(1))=a+b$
and $k_{2}=E(D_{i}\mid Z_{i}=0)=E(D_{i}(0))=a+c$. We observe that
$cov(D_{i},Z_{i})>0$ implies that $E(D_{i}\mid Z_{i}=1)>E(D_{i}\mid Z_{i}=0)$,
which means $k_{1}-k_{2}=b-c>0$. We notice that $\max\{|\mu_{1}|,|\mu_{2}|\}=\max\{\mu_{1},-\mu_{1},\mu_{2},-\mu_{2}\}$.
By $\beta=(\mu_{1}b-\mu_{2}c)/(b-c)$, we have
\[
\mu_{1}=\lambda\mu_{2}+(1-\lambda)\beta,
\]
where $\lambda=c/b$. This means that $\max\{|\mu_{1}|,|\mu_{2}|\}=f(\mu_{2},\lambda)$,
where 
\[
f(\mu_{2},\lambda):=\max\left\{ \lambda\mu_{2}+(1-\lambda)\beta,\ -\lambda\mu_{2}-(1-\lambda)\beta,\ \mu_{2},\ -\mu_{2}\right\} .
\]

Recall that $k_{1}=a+b$, $k_{2}=a+c$ and $b-c=k_{1}-k_{2}>0$. Thus,
$\lambda=c/(k_{1}-k_{2}+c)$ with $c\in[0,k_{2}]$. This means that
$\lambda\in[0,k_{2}/k_{1}]\subset[0,1)$. We now find 
\[
\min_{\mu_{2}\in\RR,\lambda\in[0,k_{2}/k_{1}]}f(\mu_{2},\lambda).
\]

For any $\lambda>0$, we notice that $\min_{\mu_{2}\in\RR}f(\mu_{2},\lambda)$
must occur at a point such that two of the four components of $f(\mu_{2},\lambda)$
are equal; otherwise, we can change $\mu_{2}$ slightly to lower the
largest component even further to lower $f(\mu_{2},\lambda)$. Therefore,
$\mu_{2}(\lambda)\in\arg\min_{\mu_{2}\in\RR}f(\mu_{2},\lambda)$ if
and only if at least one of the following is true:
\begin{itemize}
\item $\lambda\mu_{2}(\lambda)+(1-\lambda)\beta=\mu_{2}(\lambda)$ (i.e.,
$\mu_{2}(\lambda)=\beta$)
\item $\lambda\mu_{2}(\lambda)+(1-\lambda)\beta=-\mu_{2}(\lambda)$ (i.e.,
$\mu_{2}(\lambda)=-\beta(1-\lambda)/(1+\lambda)$)
\item $\mu_{2}(\lambda)=-\mu_{2}(\lambda)$ (i.e., $\mu_{2}(\lambda)=0$)
\item $\lambda\mu_{2}(\lambda)+(1-\lambda)\beta=-\lambda\mu_{2}(\lambda)-(1-\lambda)\beta$
(i.e., $\mu_{2}(\lambda)=(1-\lambda^{-1})\beta$).
\end{itemize}
We plug these four values of $\mu_{2}(\lambda)$ into $f(\mu_{2},\lambda)$
and take the minimum of the four values of $f(\mu_{2},\lambda)$,
obtaining that for $\lambda>0$, 
\[
\min_{\mu_{2}\in\RR}f(\mu_{2},\lambda)=|\beta|\cdot\min\left\{ 1,\left|1-\lambda\right|,\frac{\left|1-\lambda\right|}{1+\lambda},\frac{\left|1-\lambda\right|}{\lambda}\right\} .
\]

Since $\lambda\in[0,1)$, it follows that for $\lambda>0$, 
\[
\min_{\mu_{2}\in\RR}f(\mu_{2},\lambda)=|\beta|\cdot\frac{\left|1-\lambda\right|}{1+\lambda}.
\]

Now we take the infimum over $\lambda\in(0,k_{2}/k_{1}]$, obtaining
\[
\inf_{\mu_{2}\in\RR,\lambda(0,k_{2}/k_{1}]}f(\mu_{2},\lambda)=\inf_{\lambda\in(0,k_{2}/k_{1}]}|\beta|\cdot\frac{\left|1-\lambda\right|}{1+\lambda}=|\beta|\cdot\frac{k_{1}-k_{2}}{k_{1}+k_{2}}.
\]

We observe that $\min_{\mu_{2}\in\RR}f(\mu_{2},0)=|\beta|$. Thus,
\[
\min_{\mu_{2}\in\RR,\lambda\in[0,k_{2}/k_{1}]}f(\mu_{2},\lambda)=|\beta|\cdot\frac{k_{1}-k_{2}}{k_{1}+k_{2}}.
\]

Therefore, we have proved that for any $\mu_{2}\in\RR$ and for any
$\lambda\in[0,k_{2}/k_{1}]$,
\[
\max\{|\mu_{1}|,|\mu_{2}|\}\geq|\beta|\cdot\frac{k_{1}-k_{2}}{k_{1}+k_{2}}.
\]

The proof is complete. 
\end{proof}

\begin{proof}[\textbf{Proof of Theorem \ref{thm: new ID sign}}]
We recall that $\beta=(\mu_{1}b-\mu_{2}c)/(b-c)$, which means 
\begin{equation}
\mu_{1}=(c/b)\mu_{2}+(1-c/b)\beta.\label{eq: thm new ID sign 3}
\end{equation}

We observe that $cov(D_{i},Z_{i})>0$ implies that $E(D_{i}\mid Z_{i}=1)>E(D_{i}\mid Z_{i}=0)$,
which means $b-c>0$. We consider two cases: (A) $\beta>0$ and (B)
$\beta<0$. 

\textbf{Step 1:} prove the result in the case of $\beta>0$.

We proceed by contradiction. Suppose that $|\mu_{1}|\geq|\mu_{2}|$
and $\mu_{1}\leq0$. By (\ref{eq: thm new ID sign 3}), this means
$c\mu_{2}+(b-c)\beta\leq0$, which can be written as $\mu_{2}\leq-(b-c)\beta/c$.
Since $b-c>0$ and $\beta>0$, this means that $\mu_{2}\leq0$. Since
$\mu_{1},\mu_{2}\leq0$ and $|\mu_{1}|\geq|\mu_{2}|$, we have $\mu_{1}\leq\mu_{2}$.
By (\ref{eq: thm new ID sign 3}), this means 
\[
(c/b)\mu_{2}+(1-c/b)\beta\leq\mu_{2}.
\]

Using $b-c$, we obtain $\mu_{2}\geq\beta$. Since $\beta>0$, we
have $\mu_{2}>0$. However, this contradicts $\mu_{2}\leq0$. Therefore,
we have proved $\mu_{1}>0$. 

\textbf{Step 2:} prove the result in the case of $\beta<0$.

The argument is analogous. We state the argument here for completeness.
Suppose that $|\mu_{1}|\geq|\mu_{2}|$ and $\mu_{1}\geq0$. By (\ref{eq: thm new ID sign 3}),
this means $c\mu_{2}+(b-c)\beta\geq0$, which can be written as $\mu_{2}\geq-(b-c)\beta/c$.
Since $b-c>0$ and $\beta<0$, this means that $\mu_{2}\geq0$. Since
$\mu_{1},\mu_{2}\geq0$ and $|\mu_{1}|\geq|\mu_{2}|$, we have $\mu_{1}\geq\mu_{2}$.
By (\ref{eq: thm new ID sign 3}), this means 
\[
(c/b)\mu_{2}+(1-c/b)\beta\geq\mu_{2}.
\]

Using $b-c$, we obtain $\mu_{2}\leq\beta$. Since $\beta<0$, we
have $\mu_{2}<0$. However, this contradicts $\mu_{2}\geq0$. Therefore,
we have proved $\mu_{1}<0$. 

Therefore, we have proved that in both cases, $\sign(\mu_{1})=\sign(\beta)$. 
\end{proof}

\end{document}